\theoremstyle{remark}
\newtheorem{remark}{Remark}
\theoremstyle{plain}
\newtheorem{theorem}{Theorem}
\newtheorem{corollary}{Corollary}
\begin{document}

\preprint{APS/123-QED}

\title{Quantum key distribution protocol with pseudorandom bases}

\author{A.S. Trushechkin}
\affiliation{Steklov Mathematical Institute of Russian Academy of Sciences, Moscow 119991, Russia}
\affiliation{National Research Nuclear University ``MEPhI'', Moscow 115409, Russia}
\affiliation{Department of Mathematics and Russian Quantum Center, National University of Science and Technology MISiS, Moscow 119049, Russia}

\author{P.A. Tregubov}
\affiliation{National Research Nuclear University ``MEPhI'', Moscow 115409, Russia}

\author{E.O. Kiktenko} 
\affiliation{Steklov Mathematical Institute of Russian Academy of Sciences, Moscow 119991, Russia}
\affiliation{Russian Quantum Center, Skolkovo, Moscow 143025, Russia}

\author{Y.V. Kurochkin}
\affiliation{Russian Quantum Center, Skolkovo, Moscow 143025, Russia}
\affiliation{Department of Mathematics and Russian Quantum Center, National University of Science and Technology MISiS, Moscow 119049, Russia}

\author{A.K. Fedorov}
\affiliation{Russian Quantum Center, Skolkovo, Moscow 143025, Russia}
\affiliation{Department of Mathematics and Russian Quantum Center, National University of Science and Technology MISiS, Moscow 119049, Russia}
\affiliation{LPTMS, CNRS, Univ. Paris-Sud, Universit\'e Paris-Saclay, Orsay 91405, France}

\date{\today}
\begin{abstract}
Quantum key distribution (QKD) offers a way for establishing information-theoretically secure communications. 
An important part of QKD technology is a high-quality random number generator (RNG) for quantum states preparation and for post-processing procedures.
In the present work, we consider a novel class of prepare-and-measure QKD protocols, 
utilizing additional pseudorandomness in the preparation of quantum states.
We study one of such protocols and analyze its security against the intercept-resend attack.
We demonstrate that, for single-photon sources, the considered protocol gives better secret key rates than the BB84 and the asymmetric BB84 protocol. 
However, the protocol strongly requires single-photon sources.
\end{abstract}
\maketitle

\section{Introduction}

Quantum algorithms exploit the laws of quantum mechanics to solve problems exponentially faster than their best classical counterparts~\cite{Chuang}. 
Shor's quantum algorithm for fast number factoring attracted a great attention since this problem is in heart of public-key cryptosystems~\cite{Shor}.
In  view of the Shor's algorithm, the only way to ensure the absolute long-term security is to use information-theoretically secure primitives, 
such as the one-time pad scheme~\cite{Vernam,Shannon,Schneier}.
However, the need for establishing secret keys between communicating parties invites the challenge of how to securely distribute these keys~\cite{Schneier}.

Fortunately, together with the tool for breaking public-key cryptographic primitives, quantum physics allows one to establish secure communications~\cite{BB84,BB842,BB843,Gisin,Scarani}. 
By encoding information in quantum states of photons, transmitting them through fiber channels and communication via authenticated classical channel, 
QKD systems offer a practical tool for private key distribution~\cite{BB84,BB842,BB843,Gisin,Scarani}.
Unlike classical cryptography, QKD promises information-theoretical security based on the quantum physics laws~\cite{Gisin,Scarani,Wootters}. 
During last decades, great progress in theory, experimental study, and technology of QKD has been performed~\cite{LoRevs}.
However, QKD technology faces a number of challenges such as distance, key generation rate, practical security, and many others~\cite{LoRevs}. 

The idea behind the seminal proposal for QKD protocol, known as BB84 protocol~\cite{BB84}, is inspired by conjugate coding~\cite{Wiesner}.
The BB84 protocol employs the idea of usage of two orthogonal polarizations states of photons.
The BB84 protocol has been widely studied~\cite{BB84,BB842,BB843,Gisin,Scarani}, 
and its security has been proven~\cite{Lo,Lutkenhaus,Shor2,Mayers,Renner,Gottesman}.
Development of novel QKD protocols, 
offering ways to push the performance of QKD technology, 
is on a forefront of quantum information technologies.
During last decades several extensions of the BB84 protocol and alternative QKD protocols, 
such as E91~\cite{E91}, B92~\cite{B92}, six-state BB84 protocol~\cite{Gisin5}, asymmetric BB84 (we will abbreviate it as aBB84)~\cite{aBB84}, SARG04~\cite{SARG}, 
differential-phase shift~\cite{DPS}, coherent one way~\cite{COW}, and also setups with continuous variables~\cite{CV}, have been actively discussed. 

\begin{figure}[t]
\begin{centering}
\includegraphics[width=1\columnwidth]{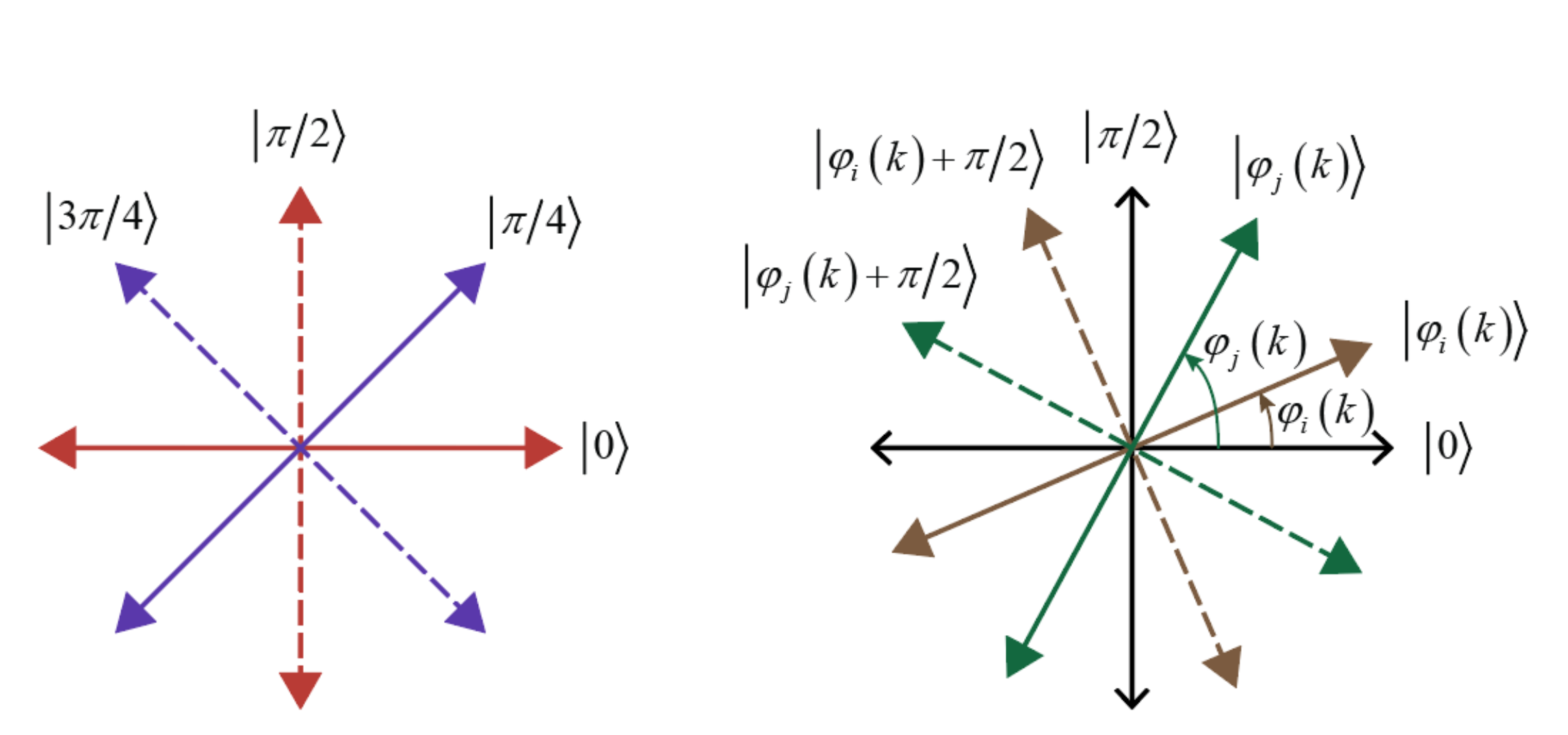}
\end{centering}
\vskip -4mm
\caption
{
Bases patterns on the Poincar\'e sphere.
The BB84 protocol (left) uses two maximally conjugated bases with the angle $\pi/4$ between each other. 
For each pulse, the bases are chosen by Alice and Bob randomly and independently, so, the sifting procedure (discarding of positions where Alice's and Bob's bases are different) is required. 
In the suggested protocol (right), the standard basis $\{\ket0,\ket1\}$ is rotated by an arbitrary angle (from a finite set) in not a random pseudorandom manner. 
Thus, the bases of Alice and Bob always coincide and there is no sifting.
}
\label{fig:setup}
\end{figure}

The point we want to stress here is the fact that for the seminal BB84 protocol and most of its variations, 
something should provide the ignorance of an eavesdropper (Eve) about the bases in which quantum states are encoded~\cite{BB84}. 
The BB84 protocol provides this condition by the random independent choice of the bases by legitimate parties (Alice and Bob). 
To this end, Alice and Bob use true random number generators (TRNG). 
However, the cost is the sifting procedure: 
Alice and Bob must discard the positions with incompatible basis choices. 
This leads to a loss of approximately a half of the raw key. 
In order to reduce the losses in the sifting procedure, the aBB84 protocol has been proposed~\cite{aBB84}. 
In this protocol, Alice and Bob use one basis with a high probability and a conjugate basis with a small probability. 
The first basis is used mainly to establish a secret key, while the second one is used to verify the absence of eavesdropping. 
We will refer to these bases as ``the signal basis'' and ``the test basis'', respectively.
In the asymptotic case of an infinitely large number of transmitted quantum states, the  probability of the use of the test basis can be made arbitrarily small. 
Hence,  the basis choices of Alice and Bob are almost always coincide and there is almost no sifting. 
Nevertheless, for a finite number of transmitted states, 
this probability cannot be made arbitrarily small since a reliable statistics for the test basis should be collected for tight estimation of the amount of eavesdropping~\cite{aBB84,Gisin6}.

In the present work,
we consider a novel class of QKD protocols, which utilizes the pseudorandomness in the preparation of quantum states. 
Namely, Alice and Bob can use not random, but pseudorandom sequence of bases generated from a common short secret key. 
On the one hand, their bases always coincide, so, the suggested scheme allows one to avoid the sifting procedure. 
On the other hand, for Eve, who does not know this key, the sequence is similar to a random one and she cannot predict it.
On the basis of this idea, 
we study a protocol with pseudorandom choice of bases (abbreviated as PRB) and analyze its security against the intercept-resend attack.
The suggested protocol is a formalization of a protocol (the floating basis protocol) described by one of the authors~\cite{Kurochkin2005,Kurochkin2009,Kurochkin2015}.
In this work, we assume that the sequence of logical bits is truly random, but the sequence of bases is pseudorandom. 
We then demonstrate that the PRB protocol gives higher secret key rates than the BB84 protocol and approximately the same rates as the asymmetric BB84 protocol. 
However, the PRB strongly requires single-photon sources.

A general motivation for the development of new QKD protocols is exploration of different ways of how we can exploit the properties of quantum information to provide information security. 
The novel idea of the protocol proposed here is a method of combining of classical pseudorandomness with quantum encoding of information. 
We should note that the known Y00 protocol~\cite{Y00,Y002,Y003} also use pseudorandom quantum states. It
provides a randomized stream cipher with information theoretic-security by a randomization based on quantum noise and additional tools.
Another important example of utilizing pseudorandomness is recently suggested mechanism for quantum data locking~\cite{Lloyd016,Pan2016}. 

We can treat QKD protocols utilizing pseudorandomness in such a way. 
Since generators of true random numbers are not sufficiently fast, pseudorandom number generators (PRNGs) are used in practical setups instead~\cite{Zbinden2014}. 
It is interesting to study how the use of pseudorandom numbers instead of true random numbers affects the security of QKD protocols (see Ref.~\cite{Bouda2012}) and, 
moreover, can it be even advantageous. 
Here we assume that the sequence of logical bits is truly random, but the sequence of bases is pseudorandom. 
As it was explained above, if we are able to prove the security of such scheme, we can made it more advantageous due to avoiding the sifting 
(with the cost of an additional secret key consumption for initial secret random seed for the PRNG in future sessions).

The paper is organized as follows.
The new QKD protocol, which we will refer to as pseudorandom basis (PRB) protocol, is described in Sec.~\ref{sec:protocol}.
Its security against the intercept-resend attack is proved in Sec.~\ref{sec:attack}.
We summarize the main results of our work in Sec.~\ref{sec:conclusion}. 
In Sec.~\ref{sec:pns}, we analyze the photon number splitting (PNS) attack and show that, unlike the BB84 protocol and its modifications, 
the PRB protocol strongly requires a single-photon source of light.

\section{QKD protocol with pseudorandom bases}\label{sec:protocol}

Let Alice and Bob have a common pre-shared key 
\begin{equation}
	k=(k_1,\ldots,k_l)\in\{0,1\}^l,
\end{equation}
which is the seed for the pseudorandom number generator (PRNG), $l$ is the key size. 
We use the following notation: 
\begin{equation}
	\ket\varphi=\cos\varphi\ket0+\sin\varphi\ket1,
\end{equation}
where $\{\ket0,\ket1\}$ is the standard basis. 
As usual in QKD, Alice and Bob have quantum channel and authenticated public classical channel: Eve freely read the communication over this channel, but cannot interfere in it.

The considered class of QKD protocol based on pseudorandomness operates as follows. 

\begin{enumerate}[(i)]

\item Using the common pre-shared key $k$ and the PRNG, Alice and Bob generate a common pseudorandom sequence in the following form:
\begin{equation}\label{eq:sequence} 
	\varphi_1(k),\ldots,\varphi_N(k), 
	\quad  
	\varphi_i(k)\in\left\{\frac{\pi j}{2M}\right\}_{j=0}^{M-1}
\end{equation}
where $M=2^m$ for some $m\geq1$. Schematically such pseudorandom rotations of the standard basis $\{\ket0,\ket1\}$ are shown on Fig.~\ref{fig:setup}.
We assume that $l$ is divisible by $m$ and denote $l/m=l'$. 
We assume that $N=2^{l'}$.

\item Using a TRNG, Alice generates the random bits as follows:
\begin{equation}\label{eq:alice-bits} 
	x_1,\ldots,x_N.
\end{equation}

\item Using sequence~(\ref{eq:sequence}) and generated random bits~(\ref{eq:alice-bits}), Alice prepares the following sequence of states:
\begin{equation}
	\ket{\varphi_1(k)+x_1\pi/2},\ldots,\ket{\varphi_N(k)+x_N\pi/2},
\end{equation}
and sends them to Bob over the quantum channel. 

\item Bob measures them using the following bases:  
\begin{equation}
	\{\ket{\varphi_i(k)},\ket{\varphi_i(k)+\pi/2}\}, \quad i=1,\ldots,n.
\end{equation}

\item Bob then writes the results of these measurements in the binary variables $y_i$ as follows: 
$\ket{\varphi_i(k)}$ corresponds to $y_i=0$ and $\ket{\varphi_i(k)+\pi/2}$ corresponds to $y_i=1$.
In the case of ideal channel and no eavesdropping, $x_i=y_i$ for all $i$, hence, Alice and Bob can use their binary strings $x$ and $y$ as a common secret key. 
Due to noise in the channel and, probably, eavesdropping, there are some errors in these strings, and Eve potentially have some information about them. 
They are, thus, called the {\it raw keys}. 

\item The following steps of post-processing of raw keys coincide with those of BB84, so, we only briefly mention them: 
Alice and Bob perform the error correction (using error-correcting codes or interactive error correction protocols; 
for the last issues concerning the adaptation of error-correcting codes for QKD see Ref.~\cite{Kiktenko2017}) and calculate the number of detected errors~\cite{Note1,Fedorov2016}. 
If the number of errors exceeds a certain threshold, which make the secret key distribution impossible, Alice and Bob abort the protocol. 
Otherwise, they perform privacy amplification to reduce the potential Eve's information to a negligible level. 
The resulting key is called the \textit{secret key} or the \textit{final key}. 
It is the output of our QKD protocol.

\end{enumerate}

The underlying idea of the protocol is as follows. 
If Eve does not know the initial secret key $k$, and the pseudorandom angles $\varphi_i$ are similar to truly random, 
then she cannot guess all the bases correctly and her eavesdropping will cause disturbance in the Alice's and Bob's raw keys. 
Of course, a rigorous analysis is required since which takes into account that the sequence $\{\varphi_i(k)\}_{i=1}^N$  it not truly random but pseudorandom.

\begin{remark}
We note that every quantum key distillation protocol that makes use of pre-shared key can be transformed into 
an equally efficient protocol which needs no pre-shared key~\cite{Renner2007}. 
However, this fact is irrelevant for our case since we use the pre-shared key not on the stage of secret key distillation from raw keys, 
but on the previous stage of transfer of quantum states. 
\end{remark}

\begin{remark}
If $M=2$, then the protocol uses two BB84 bases. In this case, the protocol can be called  ``BB84 with pseudorandom sequence of bases'', 
while the case $M>2$ can be called the ``multibasis protocol''. 
We will see that the multibasis version gives higher secret key rates due to additional uncertainty for Eve.

The possibility of the use of arbitrary number of bases is a consequence of their correlated choice by Alice and Bob; 
otherwise this would lead to an increase of the number of positions with inconsistent bases. 

\end{remark}

\subsection{PRNG based on the Legendre symbol}\label{SecLegendre}

Our choice for the PRNG is the Legendre symbol PRNG, since it provides an almost uniform distribution of patterns, which will be exploited in the security proof. 
The PRNG is defined as follows.

Let $L$ be a prime number (public value) such that $L\equiv 3\pmod 4$, and $k\in[0,L-1]$ be a secret key. 
Let us then define
\begin{equation}\label{EqLegendre}
	\overline a_i=
	\begin{cases}
		1,&\text{if $i$ is a quadratic residue modulo $L$}\\
		 &\text{and}\, i\not\equiv0\pmod L; \\
		0,&\text{otherwise},
	\end{cases}
\end{equation} 
\begin{equation}\label{EqLegendrePRNG}
	a_i(k)=\overline a_{k+i}.
\end{equation}
Recall that $x$ is called a quadratic residue modulo $L$ if there exists an integer $y$ such that $y^2\equiv x\pmod L$. 
If $i\not\equiv0\pmod L$, the value $2\overline a_i-1$ is called the Legendre symbol of $i$. We will refer to the sequence 
\begin{equation}
	\overline a_1,\overline a_2,\ldots
\end{equation}
as the Legendre sequence. It is periodic with the period $L$.

Pseudorandom properties of Legendre sequences are known for a long time~\cite{Aladov1896}. 
In particular, the distribution of patterns of Legendre sequences is known to be close to uniform~\cite{Aladov1896,Moroz,Ding1998} (for details, see property (\ref{EqLegPatt}) from Appendix~A
and, as its direct consequence, property (\ref{EqPatt}) from Appendix~C). 
This will be important for estimation of the number of bases correctly guessed by Eve.

Let us specify the use of this PRNG for our protocol. In the two-basis version of the protocol (BB84 with pseudorandom sequence of bases), 
the basis for each position $i$ is specified by a single bit $a_i(k)$. 
The length of the key $l$ is the number of bits required to specify $k$, i.e., $\lceil \log_2 L\rceil$. 
In the following, $\lceil x\rceil$ and $\lfloor x\rfloor$ denote the ceil and the floor of $x$  (the closest integer to $x$ from above and from below), respectively.

In the multibasis version of the protocol, every basis is specified by $m$ bits, or $m$ registers. 
Each register has its own PRNG based on the Legendre symbol, so that
\begin{equation}\label{EqRndRegs}
	\varphi_i(k)=\frac\pi2\sum_{j=1}^ma_i(k^{(j)})2^{-j},
\end{equation}
where $k^{(j)}$ is a subkey (of length $l'=\lceil \log_2 L\rceil$) for the $j$th register and the sequence $(a_1(k^{(j)}),a_2(k^{(j)}),\ldots)$ is specified by (\ref{EqLegendrePRNG}). 
The total key $(k^{(1)},\ldots,k^{(m)})$ has the length $l=l'm$

\section{Intercept-resend attack}\label{sec:attack}

The simplest attack on BB84-like protocols is the intercept-resend attack. 
Here we describe this attack for the considered class of QKD protocols. 

\begin{enumerate}[(i)]

\item
Eve chooses some positions $1\leq i_1<\ldots<i_n\leq N$ to intercept, where $0<n\leq N$. 
Denote $\gamma=n/N$ the fraction of positions she intercepts. Then, for each $j=1,\ldots,n$, Eve performs the next steps.

\item 
 Eve chooses an angle $\beta_{i_j}$, measures the $i_j$th qubit in the basis
\begin{equation}
	\left\{\ket{\beta_{i_j}},\ket{\beta_{i_j}+\frac\pi2}\right\}, 
\end{equation}
and writes the result in the variable $z_{i_j}$ (0 or 1, respectively).

\item
 Eve sends a new qubit in the state $\ket{\beta_{i_j}+ z_{i_j}\pi/2}$ to Bob. 

\end{enumerate}

The crucial point is that the results of Eve's measurements alone leaks no information about the bases and, hence, about the initial secret key (the seed for the PRNG) $k$.

Indeed, denote $p(k|\mathbf z)$ the probability of the key $k$ conditioned on the Eve's results of the measurements $\mathbf z=(z_{i_1},\ldots,z_{i_n})$. 
Denote
\begin{equation}
	\bm\varphi=\bm\varphi(k)=(\varphi_1(k),\varphi_2(k),\ldots,\varphi_N(k)).
\end{equation}
We then define the probability 
\begin{equation}
	p(k|\mathbf z)=p(\bm\varphi|\mathbf z)=\frac{p(\mathbf z|\bm\varphi)p(\bm\varphi)}{p(\mathbf z)}.
\end{equation}
Here
\begin{equation}
	p(\mathbf z|\bm\varphi)=\prod_{j=1}^n p(z_{i_j}|\varphi_{i_j})=2^{-n}
\end{equation}
since
\begin{eqnarray}
	&&p(z|\varphi)=p(z|\varphi,x=0)p(x=0)+p(z|\varphi,x=1)p(x=1)\nonumber\\
	&&=\frac{1}{2}[p(z|\varphi,x=0)+p(z|\varphi,x=1)]\\
	&&=\frac{1}{2}\left[\cos^2\left(\beta-\varphi-\frac\pi2 z\right)+\sin^2\left(\beta-\varphi-\frac\pi2 z\right)\right]=\frac{1}{2}\nonumber,
\end{eqnarray}
where we have omitted the subindices $i_j$. 
We have also
\begin{equation}
	p(\mathbf z)=\sum_{\bm\varphi'}p(\mathbf z|\bm\varphi')p(\bm\varphi')=2^{-n}\sum_{\bm\varphi'}p(\bm\varphi')=2^{-n}.
\end{equation}
Therefore, we arrive at the following expression: 
\begin{equation}\label{EqNoAprioriInf}
	p(k|\mathbf z)=p(\bm\varphi|\mathbf z)=p(\bm\varphi)=p(k),
\end{equation}
which means that the a posteriori probability is equal to the a priori one. 
In other words, Eve obtains no information on the key if the knows only the results of her measurement. 
This is due to the randomization introduced by the random and yet unknown to Eve bits $x_i$. 
In fact, the quantum state of a qubit for unknown $x$ is the completely mixed state:
\begin{equation}\label{EqChaoticState}
	\frac{1}{2}\ket{\varphi}\bra{\varphi}+\frac{1}{2}\ket{\varphi+\frac\pi2}\bra{\varphi+\frac\pi2}=\frac{1}{2}I,
\end{equation}
where $I$ is the identity operator. 

Thus, on the stage of quantum state transmission, Eve chooses the angle $\beta_{i_j}$ with no information on the key and have to \textit{guess} the bases or the initial key. 
Since it is unlikely that she correctly guesses all bases, we arrive at the keystone of the security of QKD: eavesdropping cause disturbance. 
Rigorous estimations of the number of bases that Eve can correctly guess is the main part of security proof.

From the other side, we assume that, after the accomplishment of all stages of the protocol and, moreover, after the transmission of the encrypted message, 
Eve is able to determine the initial key. So, a posteriori, she gets knowledge of the correct bases. 
In Appendix~B we show that Eve needs of order $l$ bits to intercept to guess the initial key if she knows the encrypted message (``known plaintext attack''). 

In our analysis, we assume that $N$ and $n$ are so large that we can neglect the statistical fluctuations, since our aim is to give general analysis of the protocol, 
not the ultimate formulas for the practical applications. 

\subsection{BB84 with pseudorandom sequence of bases}\label{SecBB84PRB}

For a transparent analysis, we first consider the protocol BB84 with pseudorandom bases. 
In this case, the basis choice is specified by a single bit $a_i$. 
Let the upper bound on the number of bases correctly guessed by Eve for a given  $\gamma$ be $n_{\rm correct}(\gamma)$. 
Respectively, $n_{\rm incorrect}(\gamma)=n-n_{\rm correct}(\gamma)$ is the lower bound on the number of incorrect guesses. 
Recall that we assume that  Eve eventually gets knowledge of the initial key $k$, hence, for each position, she knows whether she has correctly guessed the basis in this position or not. 
Consequently, the QBER ($q$) and the Eve's mean information on a raw key bit are as follows:
\begin{eqnarray}
	q(\gamma)&=&\frac12\frac{n_{\rm incorrect}(\gamma)}N \label{EqQalpha} \\
	I_{\rm E}(\gamma)&=&\frac{n_{\rm correct}(\gamma)}N.
\end{eqnarray}
The legitimate parties have the measured (or estimated) value of QBER. 
If we replace the left-hand side of Eq.~(\ref{EqQalpha}) by this value, we can find the inverse function as follows:
\begin{equation}
	\gamma=\gamma(q).
\end{equation}
This is an estimation of the fraction of qubits intercepted by Eve for a given QBER. 
Then one has
\begin{equation}
	I_{\rm E}(q)=\frac{n_{\rm correct}(\gamma(q))}N.
\end{equation}
From the other side, Bob's mean information on a bit of the Alice's raw key is 
\begin{equation}
	I_{\rm B}(q)=1-h(q).
\end{equation}
Here 
\begin{equation}
	h(p)=-p\log_2p-(1-p)\log_2(1-p)
\end{equation}
is the binary entropy, $0\leq h(p)\leq 1$. 
However, to fully exploit this information, 
Alice and Bob require error-correcting scheme that achieves the theoretical (Shannon) limit, in which $h(q)$ bits of information about raw keys are revealed over the public channel. 
Practically, $f(q)h(q)$ bits are revealed, where $f(q)\geq1$ is the efficiency of the error correction scheme. So, the ``effective'' Bob's mean information on a bit of the Alice's raw key is 
\begin{equation}
	I_{\rm B}(q)=1-f(q)h(q).
\end{equation}

Then, the secret key rate (per transmitted qubit, also called secret fraction) has the following form~\cite{Csiszar,Gisin,Scarani}:
\begin{equation}\label{EqRate}
\begin{split}
	\!\!R(q)=I_{\rm B}(q)-I_{\rm E}(q)=1-f(q)h(q)-I_{\rm E}(q).
\end{split}
\end{equation}

Eve can try to guess the elements of the sequence $\{a_i\}$ as it were a truly random sequence. 
In this case, she correctly guesses approximately 
\begin{equation}
	n_{\rm correct}\approx \frac n2=\frac{\gamma N}2
\end{equation}
of the bases.
Thus, 
\begin{equation}
	q(\gamma)=\frac\gamma4, \,\, \gamma(q)=4q, \,\, I_{\rm E}(q)=2q, 
\end{equation}
and the secret fraction is as follows:
\begin{equation}\label{EqRateABB84}
\begin{split}
	R(q)=I_{\rm B}(q)-I_{\rm E}(q)=1-f(q)h(q)-2q.
\end{split}
\end{equation}

But Eve can exploit the fact that the sequence $\{a_i\}$ is not random, but pseudorandom and contains some regularities. 
The estimation of $n_{\rm correct}(\gamma)$ for this case is rather involved and is given in Appendix~C. 
Here we give a summary of the analysis and results of Appendix~C. 

The analysis of pseudorandom sequences is an important part of classical cryptography. 
But in classical cryptography it is usually assumed that Eve has a limited computing power and cannot use the brutal force attack. 
Here we assume that Eve has an unlimited computing power, which is common for quantum cryptography. 
Suppose that Eve succeeded to guess a subset $\mathcal K_1\subset \mathcal K$ which contains the actual key $k$. 
Other words, it is unlikely that she correctly guesses the key $k$ for large $l$ (the probability is $1/|\mathcal K|\sim 2^{-l}$), 
but she can guess that $\overline k$ belongs to a certain subset $\mathcal K_1$. 
The probability of such success is equal to $|\mathcal K_1|/|\mathcal K|$. 
Then she can choose not arbitrary $n=\gamma N$ positions to attack, but  special positions. 
Namely, positions $i$ such that the bits $a_i(k')$ coincide with each other for most $k'\in\mathcal K_1$ are preferable. 
Following this way of thinking, we arrive at the optimization problem. If $|\mathcal K_1|$ is less or comparable to $l$, we are able to solve it explicitly. 
This is done in Theorem~\ref{Th} and adopted for practical situation in Corollary~\ref{CorGuess}. In this case, we can use explicit formula (\ref{Eqncorropt}). 
If $|\mathcal K_1|$ is large, then we can still use formula (\ref{Eqncorropt}), but it gives too pessimistic (for Alice and Bob) estimate of $n_{\rm correct}(\gamma)$. 
A tighter bound can be obtain if we numerically solve the linear programming problem given in formula~(\ref{Eqncorroptlp}) (Corollary~\ref{CorGuess2}). 
The linear programming problems are known to have efficient algorithms of solutions. 

Of course, $n_{\rm correct}(\gamma)$ increase as $|\mathcal K_1|$ decreases (Eve adopt her attack to a tighter set of keys). 
However, the probability that $k\in\mathcal K_1$ is $|\mathcal K_1|/|\mathcal K|$, i.e., small whenever $|\mathcal K_1|$ is small. 
So, both estimates (\ref{Eqncorropt}) and (\ref{Eqncorroptlp}) are dependent on the additional parameter $\varepsilon=|\mathcal K_1|/|\mathcal K|$, 
i.e., $n_{\rm correct}=n_{\rm correct}(\gamma,\varepsilon)$. 
The parameter $\varepsilon$ can be called the failure probability: the probability that Eve will succeed to guess a more tight set containing the actual key, 
other words, that she will be more lucky than we expect. The emergence of such (in)security parameter is common for QKD security proofs \cite{Renner}.

In short, we use Eq.~(\ref{Eqncorropt}) (explicit formula) or Eq.~(\ref{Eqncorroptlp}) (linear programming problem which gives a tighter bound) to estimate $n_{\rm correct}(\gamma,\varepsilon)$ 
from above for given failure probability $\varepsilon$. 
These estimations can be substituted to Eq.~(\ref{EqQalpha}) to find the function $q(\gamma)$ and then to Eq.~(\ref{EqRate}) to obtain (numerically) the secret fraction.

It turns out that Eve can guess more elements of pseudorandom sequence than those of truly random sequence (see the end of Appendix~C). By this reason, the BB84 protocol with pseudorandom sequence of bases gives higher secret key rates than the usual BB84 protocol (because of the absence of sifting), but lower secret key rates than the asymmetric BB84 protocol. So, we do not consider the BB84 protocol with pseudorandom sequence of bases as a real alternative to aBB84 and switch to the multibasis case. In the multibasis case, the larger number of bases that Eve can correctly guess for the pseudorandom sequence is compensated by additional uncertainty for Eve caused by the use of many (instead of two) bases. In Sec.~\ref{SecNum}, we will compare the results of the multibasis PRB protocol with BB84 and aBB84 and show that the multibasis protocol can give slightly better results than the aBB84 protocol.

\subsection{Multibasis case}

Here we investigate the intercept-resend attack for the multibasis version if the protocol. 
Denote the difference between the Eve's guess of the $i$th angle $\varphi_i^{\rm E}$ and the actual angle $\varphi_i(k)$ as $\Delta_i$ and let $(b^{(1)}_i,\ldots,b^{(m)}_i)$ be its binary expansion:
\begin{equation}\label{EqBinExpanDelta}
	\Delta_i=\varphi_i^{\rm E}-\varphi_i(k)=\frac\pi2\sum_{j=1}^m  b_i^{(j)} 2^{-j}.
\end{equation}
For each register $j$, 
the upper bound of the number of bits $b^{(j)}_i$ correctly guessed by Eve is $n_{\rm correct}(\gamma)$ given by either Eq.~(\ref{Eqncorropt}) or Eq.~(\ref{Eqncorroptlp}) from Appendix~C 
(i.e., now $n_{\rm correct}(\gamma)$ denotes the number of correctly guessed bits in a single register). 
Denote $T\subset\{1,\ldots,N\}$ the set of pulses intercepted by Eve, $|T|=n=\gamma N$. 
Let us pick a position from $T$ at random. 
For each register, consider the event that the corresponding bit is correctly guessed. 
The probability of this event is (at most) $n_{\rm correct}(\gamma)/(\gamma N)$. Since the keys for different registers are chosen independently, these events are independent. 
Therefore, one has
\begin{eqnarray}
	&&\Pr\left[\Delta_i=\frac{\pi t}{2M}\right]\equiv p_t(\gamma)\nonumber\\
	&&=\prod_{j=0}^{m-1} \Pr\left[b_i^{(j)}=\lfloor 2^{-j}t\rfloor \bmod 2\right]
	\nonumber\\
	&&=\!\left(\frac{n_{\rm correct}(\gamma,\frac\varepsilon m)}{\gamma N}\right)^{\#0(t)}\!\left(\frac{n_{\rm incorrect}(\gamma,\frac\varepsilon m)}{\gamma N}\right)^{\#1(t)}\!,~
\end{eqnarray}
where ${\#0(t)}$ and ${\#1(t)}$ are the numbers of 0's and 1's in the binary expansion of $t$. 
Note the argument $\varepsilon/m$ of the function $n_{\rm incorrect}$: 
if the probability that Eve correctly guesses more then a given number of bits in a single register is not greater than $\varepsilon/m$, 
then the probability that Eve correctly guesses more then a given number of bits in each of $m$ registers is not greater than $\varepsilon$.

\begin{remark}\label{RemNcorr}
For Eve, the correct guessing of the highest-order bit $b_i^{(1)}$ in the binary expansion (\ref{EqBinExpanDelta}) is of the most importance. 
So, her optimal strategy is to chose positions to intercept which maximize the number of correctly guessed elements in the sequence for the first register $(b_1^{(1)},b_2^{(1)},\ldots)$. 
The maximal number of correct guesses is bounded from above as $n_{\rm correct}(\gamma,\varepsilon)$. 
Since Eve adjusts attack to optimize the number of correct guesses in the first register, she is not so good in the number of correct guesses in further registers. 
But, in favor of Eve, we bounded the number of correct guesses for other registers from above also by the same quantity $n_{\rm correct}(\gamma,\varepsilon)$.
\end{remark}

Now let us derive formulas for QBER and Eve's mean information on a raw key bit.
For simplicity, let us drop the subscript $i$. Denote $x\in\{0,1\}$ the bit value transmitted by Alice, $y,z\in\{0,1\}$ the results of Bob's and Eve's measurements. 
We then have 
\begin{widetext}
\begin{eqnarray}
	&&p(z|x,\Delta)=\cos^2\left(\Delta+\frac\pi2(x-z)\right),\quad p(y|z,\Delta)=\cos^2\left(\Delta+\frac\pi2(z-y)\right). \\
	&&p(y\neq x|\Delta)=p(y\neq x|z=x,\Delta)p(z=x|\Delta)+p(y\neq x|z\neq x,\Delta)p(z\neq x|\Delta)=\frac12\sin^2(2\Delta)=\frac14[1-\cos(4\Delta)], \\
	&&p(y\neq x)=\sum_{j=0}^{M-1}p_j(\gamma)p\left(y\neq x|\Delta=\frac{\pi j}{2M}\right)=\frac14\sum_{j=0}^{M-1}p_j(\gamma)\left[1-\cos\left(\frac{2\pi j}{M}\right)\right].
\end{eqnarray}
\end{widetext}
where $p(y\neq x)$ is the probability of error in Alice's and Bob's bit for an intercepted position. 

To obtain the QBER value, one has to multiply this quantity on the fraction of intercepted positions:
\begin{equation}\label{EqQmany}
	q=\gamma p(y\neq x)=\frac\gamma 4\sum_{j=0}^{M-1}p_j(\gamma)\left[1-\cos\left(\frac{2\pi j}{M}\right)\right].
\end{equation}
The Eve's information on an intercepted raw key bit $x$ is as follows:
\begin{equation}
	I_{\rm E}^{\rm intercepted}(\gamma)=1-\sum_{j=0}^{M-1}p_j(\gamma)h\left(\cos^2\left(\frac{\pi j}{2M}\right)\right)
\end{equation} 
The mean Eve's information on a raw key bit then has the following form:
\begin{equation}
	I_{\rm E}(\gamma)=\gamma I_{\rm E}^{\rm intercepted}(\gamma).
\end{equation}
From Eq.~(\ref{EqQmany}) we can find the inverse function $\gamma(q)$, which expresses the fraction of intercepted positions $\gamma$ dependent on the measured value of QBER $q$. 
Then we have
\begin{equation}\label{EqIEMany}
	I_{\rm E}(q)=\gamma(q)\left[1-\sum_{j=0}^{M-1}p_j(\gamma(q))h\left(\cos^2\left(\frac{\pi j}{2M}\right)\right)\right].
\end{equation}
To calculate the secret fraction, (\ref{EqIEMany}) should be substituted into (\ref{EqRate}).

It is useful to calculate the Eve's information in case $N\to\infty$ (also $l'\to\infty$ since $N=2^{l'}$). In this case, $n_{\rm correct}(\gamma)/n=1/2$ (see the Remark~\ref{RemCorrOpt} in Appendix~C and Eq.~(\ref{EqLegPatt}) in Apendix~A). 
Then we arrive at the following expression:
\begin{equation}
	q=\gamma\left[\frac14-\frac1M\sum_{j=0}^{M-1}\cos\left(\frac{2\pi j}{M}\right)\right]=\frac\gamma4.
\end{equation}
The mean Eve's information is then
\begin{equation}
\begin{split}
	I_{\rm E}(q){=}4q\left[1-\frac{1}{M}\sum_{j=0}^{M-1}h\left(\cos^2\left(\frac{\pi j}{2M}\right)\right)\right]\equiv4q\zeta(M),
\end{split}
\end{equation}
where $\zeta(M)$ is a decreasing function of $M$, 
\begin{equation}
\lim\limits_{M\to\infty}\zeta(M)\approx0.4427.
\end{equation}
The function $\zeta(M)$ is shown in Fig.~\ref{fig:zeta}.
\begin{figure}[t]
\begin{centering}
\includegraphics[scale=.45]{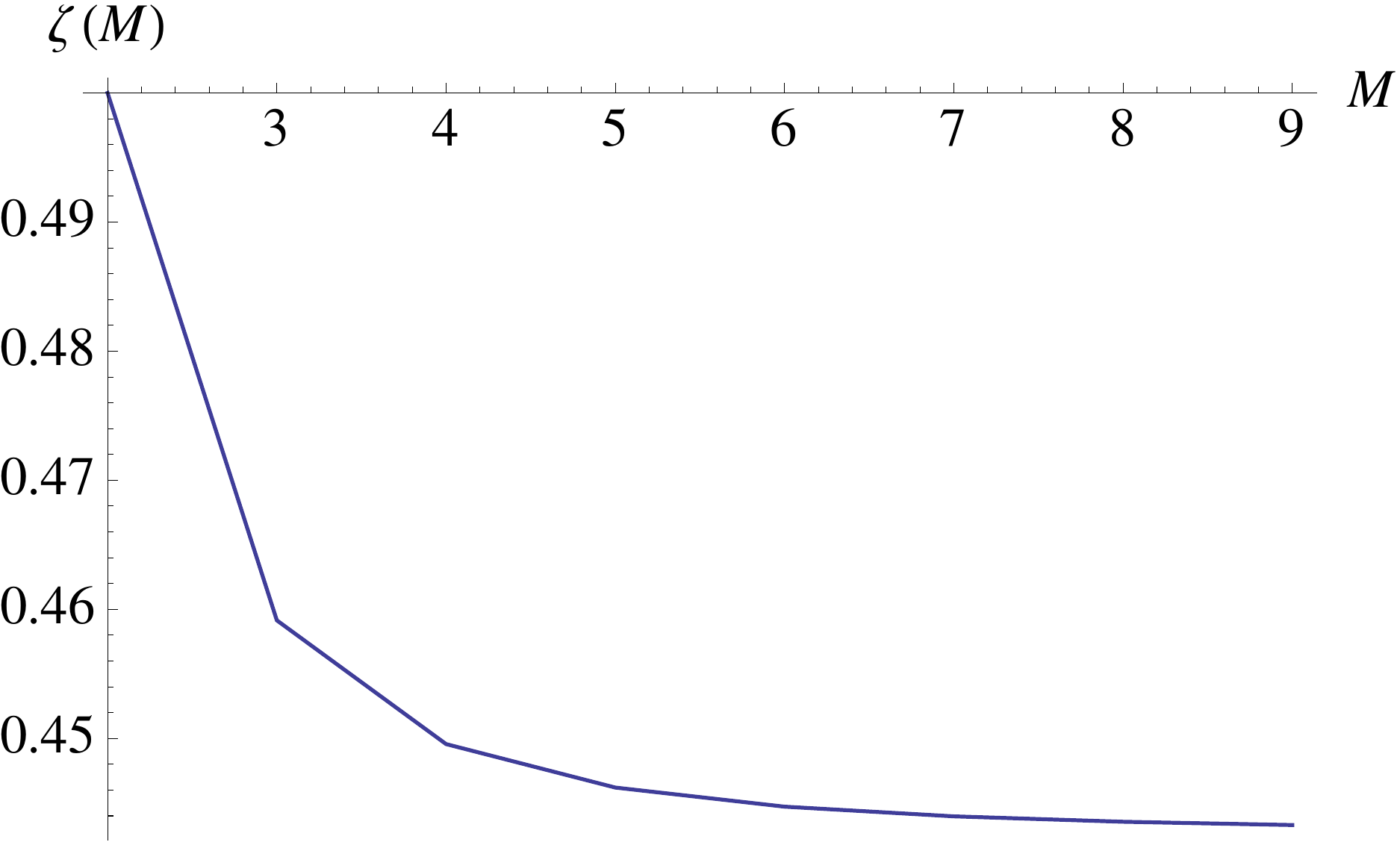}
\end{centering}
\vskip -4mm
\caption
{
Function $\zeta(M)$.
}
\label{fig:zeta}
\end{figure}
Since $\zeta(M)<0.5$ for $M>2$, the multibasis version protocol has advantage over the BB84 with pseudorandom sequence of bases.

Finally, we arrive at the following expression for the secret fraction:
\begin{equation}
	R=1-f(q)h(q)-I_{\rm E}(q)=1-f(q)h(q)-4q\zeta(M).
\end{equation}

\subsection{Numerical comparison}\label{SecNum}

We compare the secret key rates per bit of the raw key (secret fractions) for the multibasis PRB protocol (Eq.~(\ref{EqRate}) and Eq.~(\ref{EqIEMany})) with those for the BB84 protocol and the asymmetric BB84. The secret key rate per bit of the raw key for the BB84 protocol is given by
\begin{equation}\label{EqRateBB84}
	R(q)=\frac12[1-f(q)h(q)-2q],
\end{equation}
where the factor $1/2$ is due to sifting of a half of positions. 

We consider the following variation of the asymmetric BB84 protocol. 
For each pulse, Alice and Bob choose independently the basis $\{\ket0,\ket{\pi/2}\}$ (``the signal basis'') 
with the probability $1-p$ or the basis $\{\ket{\pi/4},\ket{3\pi/4}\}$ (``the test basis'') with the probability $p$. 
The first basis is used to establish the secret key, while the second one is used to detect the eavesdropping. 
The sifting rate is, thus, on average, $1-(1-p)^2$.  
Alice and Bob announce the bit values for positions encoded using the test basis and calculate the QBER in the test basis $q_\times$. 
Also, after the error correction step, they calculate the QBER in the first basis $q_+$ (see Sec.~\ref{sec:protocol}, step~(iv) of the protocol). 
We consider the case of the absence of actual eavesdropping, where the QBER is caused only by natural noise, then, on average, $q_+=q_\times=q$. 

The mean Eve's information per bit of the sifted key is $2q$. But now, for small $p$, we cannot neglect the statistical fluctuations. 
We can use, for example, the Hoeffding's inequality: if $X$ is a binomially distributed random variable with the probability of success $P$ and the number of trials $K$, then
\begin{equation}
	\Pr[X\geq(P+\delta)K]\leq e^{-2\delta^2K}=\varepsilon
\end{equation}
with $\varepsilon$ being the failure probability (probability that Eve is more lucky that we expect; its meaning is the same as $\varepsilon$ in the PRB protocol), or,
\begin{equation}\label{EqHoeffding}
	\delta(P,\varepsilon)=\sqrt{\frac{1}{2K}\ln\frac1\varepsilon}.
\end{equation}
Here we have $P=q$, $K=p^2N_{\rm r}$ (the mean number of positions received by Bob for which both Alice and Bob chose the test basis), where $N_{\rm r}\leq N$ denotes the number of pulses received by Bob (recall that $N$ is the number of pulses sent by Alice). Then
\begin{equation}\label{EqaBB84}
	R(q)=(1-p)^2\left[1-f(q)h(q)-
	2\left(q+\sqrt{\frac{1}{2p^2N_{\rm r}}\ln\frac1\varepsilon}\right)\right].
\end{equation}
So, a smaller $p$ leads to lower sifting, but also to higher statistical fluctuations of the potential Eve's information that we have to take into account. 
In the calculations we optimized (\ref{EqaBB84}) over $p$ for each value of $q$.

In the calculations, we use the following parameters: $L=N=10^{10}-33$, $l'=\log_2L\approx 16$, $m=10$ ($M=1024$ bases). 
To obtain function $\gamma(q)$ in (\ref{EqIEMany}), 
we have used formula (\ref{Eqncorroptlp}) with the failure probability $\varepsilon\approx 10^{-6}$ (more precisely, $\varepsilon/m=S/L$ for $S=1000$). 
The parameter $s$ in (\ref{Eqncorroptlp}) was set to $s=12$. 
The number of pulses sent by Alice for all protocols is equal to $L$. 
The number of pulses received by Bob is $N_{\rm r}=N$ if the quantum channel is lossless. 
For a lossy channel we have taken a realistic loss rate $N_{\rm r}/N=0.001$. 
The failure probability for aBB84 in (\ref{EqaBB84}) was also taken as $\varepsilon\approx 10^{-6}$. The results are given on Fig.~\ref{fig:res}.
 
\begin{figure}[t]
\begin{centering}
\includegraphics[scale=.45]{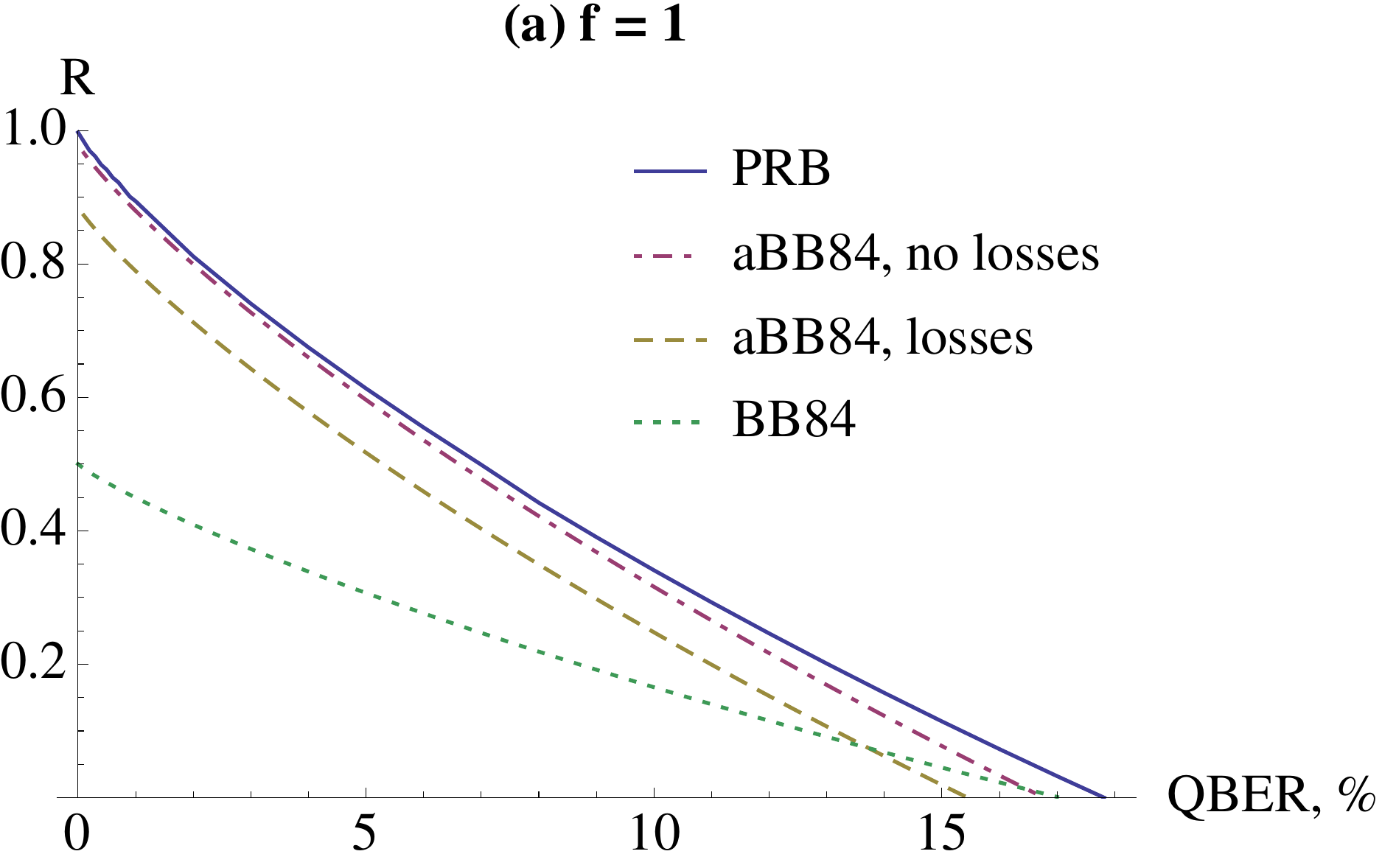}
\smallskip

\includegraphics[scale=.45]{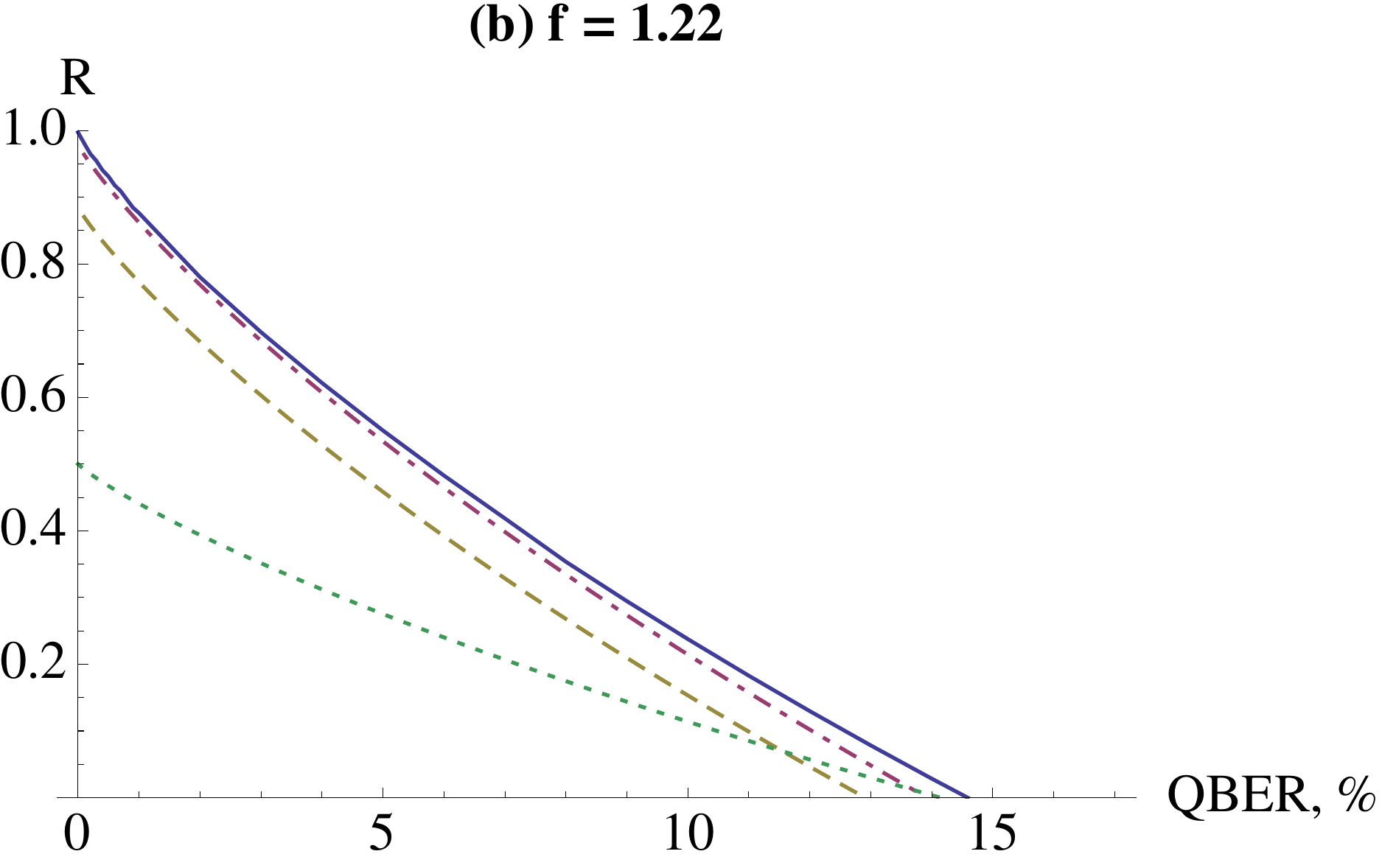}
\end{centering}
\vskip -4mm
\caption
{
Secret key rates (per bit of the raw key, before sifting) of the presented pseudorandom bases (PRB) protocol, the BB84 protocol and the asymmetric BB84 (aBB84) protocol with and without losses in the quantum channel, when the efficiency of error-correction achieves the theoretical limit $f=1$ (a) and for practically achievable efficiency $f=1.22$ (b). It can be seen that the considered protocol gives better secret key rates than the BB84 protocol and approximately the same rates as the asymmetric BB84 protocol.
}
\label{fig:res}
\end{figure}

It is clearly seen that the PRB protocol gives twice as large secret fraction as the BB84 protocol (due to absence of sifting). 
In PRB, the larger number of bases that Eve can correctly guess for the pseudorandom sequence is compensated by additional uncertainty for Eve caused by the use of many (instead of two) bases.
This results in approximately the same secret fractions for PRB and asymmetric BB84 for the lossless channels. 
However, if the channel is lossy, Alice and Bob have to increase $p$ and, hence, sifting rate, to collect large enough statistics for the test basis. 
In this case we can see that PRB gives slightly better results. 

Note that the losses do not decrease the secret fraction for the PRB protocol since the estimate of the number of bases correctly guessed by Eve is dependent on $L$ and not on $N_{\rm r}$.
Moreover, if optimal positions to attack are lost (see the end of Sec.~\ref{SecBB84PRB} 
for general comments on the optimal Eve's attack on the PRB protocol and Appendix~C for rigorous analysis), the losses even weaken Eve's attack .

\section{Photon-number splitting plus quantum states discrimination attack}\label{sec:pns}

We performed the analysis for an ideal case, where the light source is assumed to be single-photon. 
Practically, usually weak coherent pulses are used~\cite{Gisin,Scarani}. 
This gives possibilities to Eve to perform additional attacks, for example, photon number splitting (PNS) attack. 
In this attack it is assumed that  quantum technologies are fully accessible for Eve. 
Let us describe this attack for the BB84 protocol. 
Eve measures the number of photons in each pulse and, if the number of photons is at least two, takes one photon and save it in her quantum memory. 
After the announcement of the bases, she measures this photon in the known basis and, so, obtains a bit of information about the raw keys without disturbance. 
This potential Eve's information must be taken into account by Alice and Bob. 
Eve also can stop the single-photon pulses to increase the fraction of the multiphoton pulses (i.e., the number of pulses about which she can obtain full information without disturbance). 
To detect such actions, the so called decoy state method has been proposed and developed \cite{LoMa2005,Wang2005,MaLo2005,Ma2017,Trushechkin2016}. 
Its purpose allows one to obtain tight estimates on the number of single-photon pulses and the number of errors in these pulses. 

For the case of the proposed pseudorandom basis protocol, Eve can also perform the PNS attack. 
While in BB84 she  waits for the announcement of bases, here she waits for the moment when she gets full knowledge of the initial key and, hence, bases. 
To account for these attack, Alice and Bob can also use the decoy state method.

But now Eve can perform another type of attack, which we will refer to as ``photon number splitting plus quantum state discrimination'' (PNS+QSD) attack. 
Namely, Eve has a possibility to use multiphoton pulses to get knowledge of the initial key during the transmission of quantum states. 
Recall that all our analysis above was based on the assumption that Eve has zero information on the initial key during the transmission of quantum states and has to guess the bases. 
But now she can perform the following variant of the PNS attack. Again, she measures the number of photons in each pulse. 
If the number of photons in a pulse is at least three, she sends one photon to Bob (i.e., does not introduce disturbance) and takes two photons to her quantum memory. 
We have shown that a single photon without the knowledge of the raw key bit leaks no information about the basis (see Eqs.~(\ref{EqNoAprioriInf}) and~(\ref{EqChaoticState})). 
But this is not true if Eve has two photons in the same state. Let us analyse this attack.

Suppose that Eve has intercepted $n$ such double photons in positions $i_1,\ldots,i_n$. 
Then, to get knowledge of the initial key $x$, she has to distinguish between $2^{l+n}$ states
\begin{equation}
	\ket{\psi(k,\mathbf x)}=\bigotimes_{j=1}^n\ket{\varphi_{i_j}(k)+\frac\pi2x_{i_j}}^{\otimes 2},
\end{equation}
where $k\in\{0,1\}^l$, $\mathbf x=\{x_{i_1},\ldots,x_{i_n}\}\in\{0,1\}^n$.

Discrimination of quantum states (or hypothesis testing) is a famous problem in quantum information science~\cite{Holevo}. 
We will use the following lower bound on the success probability $p_{\rm succ}$ of guessing the correct quantum state (in our case -- correct $k$ and $\mathbf x$)~\cite{Montanaro}:
\begin{eqnarray*}
&&p_{\rm succ}\geq \frac1{2^{l+n}}\sum_k\sum_{\mathbf x}
\frac1{\sum_{k'}\sum_{\mathbf y}\braket{\psi(k,\mathbf x)|\psi(k',\mathbf y)}^2}\\
&&=
\frac1{2^{l+n}}\sum_k\sum_{\mathbf x}
\frac1{1+\sum_{k'\neq k}\sum_{\mathbf y}\braket{\psi(k,\mathbf x)|\psi(k',\mathbf y)}^2},
\end{eqnarray*}
\begin{eqnarray*}
&&\sum_{\mathbf y}\braket{\psi(k,\mathbf x)|\psi(k',\mathbf y)}^2\\&&=
\prod_{j=1}^n\left\lbrace
\cos^4[\varphi_{i_j}(k)-\varphi_{i_j}(k')]
+
\sin^4[\varphi_{i_j}(k)-\varphi_{i_j}(k')]
\right\rbrace
\end{eqnarray*}
We restrict the analysis to the case of two bases ($M=2$). 
The analysis of the multibasis case leads to more cumbersome calculations but qualitatively the same conclusions. 
If $k\neq k'$, then approximately a half of basis choices for the keys $k$ and $k'$ coincide. Hence, 
\begin{equation}
\sum_{\mathbf y}\braket{\psi(k,\mathbf x)|\psi(k',\mathbf y)}^2\approx 2^{-n/2}
\end{equation}
and
\begin{equation}
p_{\rm succ}\geq \frac1{1+(2^l-1)2^{-n/2}},
\end{equation}
i.e., Eve needs approximately $2l\ll N$ three-photon pulses to guess the secret key $k$ with a non-negligible probability. 
Then she can measure the pulses in correct bases without disturbance. Hence, the protocol crucially requires single-photon sources.

\section{Discussions and conclusions}\label{sec:conclusion}

In the present work, we have analyzed a new prepare-and-measure QKD protocol. 
It uses the pseudorandom sequence of bases generated by the legitimate parties of communications from a common initial secret key (seed). 
The use of a common pseudorandom sequence of bases allows one to avoid the sifting procedure and, hence, losing the half of the key. 
Moreover, since the bases of Alice and Bob are always the same, they can use more than two bases. 

The main result of this work is the calculation of the secret key rates of the new protocol for the intercept-resend attack presented on Fig.~\ref{fig:res}. 
The main technical ingredient is Appendix~C, 
where the mathematical tools of analysis of pseudorandom sequences in the context of quantum cryptography (where the adversary has unlimited computing power) are proposed. 
The main practical formulas derived in this appendix are (\ref{Eqncorropt}) and~(\ref{Eqncorroptlp}). 
They give upper bound on the elements of a pseudorandom sequence that can be correctly guessed by the eavesdropper with unlimited computing power.

We have obtained that, for single-photon sources, the new protocol gives twice as large secret key rates as the original BB84 protocol and in some cases gives slightly higher rates than the aBB84 protocol. 
However, we did some assumptions in favor of Eve (see, for example, Remark~\ref{RemNcorr}). 
More tight analysis, which requires the development of techniques of Appendix~C, probably, will lead to even more significant advantage of the proposed pseudorandom multibasis protocol. 
The protocol strongly requires a single-photon source of light.

The mathematical tools developed in Appendix~C can be used in different problems of quantum cryptography, 
for example, for rigorous estimation of how the use of pseudorandom sequences (instead of truly random ones) influence the security of the conventional prepare-and-measure QKD protocols, 
such as BB84, asymmetric BB84, etc. 
First steps in such analysis have been done in Ref.~\cite{Bouda2012}. 
The novelty of our approach is the suggestion that the pseudorandomness can be even turned to an advantage.

In general, future investigations of the power of classical pseudorandomness combined with quantum uncertainty in quantum cryptography are required. 
The fundamental difference with the consideration of pseudorandom sequences in conventional cryptography is that, 
in the latter case, one typically assumes the boundedness of the eavesdropper's computing power. 
For example, it is assumed that the eavesdropper cannot use the brute force attack to try all possible seeds for a PRNG. 
In contrast, in quantum cryptography we assume unlimited computing power of the eavesdropper. 
But, from the other side, the possibilities of the eavesdropper are limited by the quantum uncertainty principle. 
Thus, the analysis of quantum pseudorandom sequences may require novel mathematical methods.

\section{Acknowledgments}

We are grateful to A.V.~Duplinskiy, V.L.~Kurochkin, and V.E.~Ustimchik for fruitful discussions, and especially grateful to M.A.~Korolev and N.~L\"utkenhaus for valuable suggestions. 
The work of A.S.T. and E.O.K. was supported by the grant of the President of the Russian Federation (project MK-2815.2017.1).
The work of A.K.F. is supported by the RFBR grant (17-08-00742).

\setcounter{equation}{0}
\setcounter{section}{0}
\renewcommand{\theequation}{A\arabic{equation}}

\section*{Appendix A. Pseudorandomness property of the Legendre sequences}\label{sec:eps}

Here we formulate the keystone pseudorandomness property of the Legendre sequences which is exploited in the present analysis -- mainly, in Appendix~C. Note that 
\begin{equation}
	\overline a_{i+L}=\overline a_i, 
\end{equation}
i.e., $\{\overline a_i\}$ is a periodic sequence. 
We then denote $\mathbb Z_L=\{0,\ldots,L-1\}$ the residue ring with respect to integer addition and multiplication modulo $L$. 
For distinct elements $i_1,\ldots,i_s\in\mathbb Z_L$ and binary $b_1,\ldots,b_s$, denote $d_{i_1,i_2,\ldots,i_s}(b_1,b_2,\ldots,b_s)$ the number of $j\in\{0,\ldots,L-1\}$ such that
\begin{equation}
	\overline a_{j+i_1}=b_1,\quad \overline a_{j+i_2}=b_2,\quad \ldots\quad \overline a_{j+i_s}=b_s,
\end{equation}
i.e., the number of occurrences of the pattern 
\begin{equation}
	*\ldots*b_1*\ldots*b_2*\ldots*\ldots*b_s
\end{equation}
in one period. 
Here $*$ are ``do-not-care'' bits. 
In other words, we look for patterns with the bit values $b_1,\ldots,b_s$ on positions $i_1,\ldots,i_s$.
Here we do not care bit values on other positions.

We will use the following bounds \cite{Aladov1896,Moroz,Ding1998}: for all distinct $i_1,\ldots,i_s\in\mathbb Z_P$ and all binary $b_1,\ldots,b_s$:
\begin{subequations}\label{EqLegPatt}
\begin{equation}
	d_{i,j}(b_1,b_2)=
	\begin{cases}
		(L-3)/4,&(b_1,b_2)=(1,1)\\
		(L+1)/4,&(b_1,b_2)\neq(1,1),
	\end{cases}
\end{equation}
\begin{equation}
	-W(s)\leq d_{i_1,i_2,\ldots,i_s}(b_1,b_2,\ldots,b_s)-\frac L{2^s}\leq W(s)
\end{equation}
for $s\geq3$, where
\begin{equation}
	W(s)=\frac{\sqrt L[2^{s-1}(s-3)+2]+2^{s-1}(s+1)-1}{2^s}.
\end{equation}
\end{subequations}
For large $s$, $W$ can be approximated as $\sqrt L[(s-3)/2]+(s+1)/2$.  

\setcounter{equation}{0}
\setcounter{section}{0}
\renewcommand{\theequation}{B\arabic{equation}}

\section*{Appendix B. Guessing the seed for PRNG}\label{SecGuess}

In our analysis we assumed that, after all stages of the protocol and after the transmission of a message encrypted with the use of the distributed key, 
Eve can correctly guesses the initial secret key (the seed for the PRNG). 
In this Appendix, we derive bounds on the number of  qubits that Eve needs to intercept  for correct guessing of the seed. We show that this assumption is not too pessimistic.

Firstly, we specify assumptions on Eve's knowledge. 
Of course, Eve knows her measurement results of intercepted qubits $z_{i_1},\ldots,z_{i_n}$. 
We further assume that, during the stage of error correction, 
Eve discovers (along with a syndrome or other messages sent via error correction) the positions where she introduces errors and the values of the bits $x_i$ and $y_i$ in such positions. This is indeed true if the Cascade protocol for error correction is used, but may be too pessimistic in the case of the use of one-way error-correcting codes (for example, LDPC codes).
Let us also denote $c_i=x_i\oplus y_i$. 

Moreover, we assume that Eve may know a part of the message encoded with the key distributed by the protocol (``known plaintext attack''). 
Let $r$ bits of the secret key $(u_1,\ldots,u_r)$ are distributed; $r<N$ due to the key contraction in the privacy amplification stage. 
The last $l$ bits from this key are kept for the next session as a new initial key. The first $r-l$ bits are used for encryption of a message, for example, using one-time pad encryption. 
Eve may know a part of this message (or even the whole message) and, hence, the corresponding bits of the key $(u_1,\ldots,u_q)$, $q\leq r-l$. 
She can use this knowledge as well as her results of quantum measurements to guess the unknown part of the distributed key and, in particular, the initial key for the next session. 
The knowledge of the initial key for the next session gives her a possibility to obtain the key of the next session without introducing errors,
since she can also construct the sequence $\{\varphi_i(k)\}$ in the next session.

For definiteness, let us assume that Bob is the side that correct errors, so that after error correction Alice and Bob have the common key $x_1,\ldots,x_N$. 
If the Toeplitz hashing is used for privacy amplification, then $u_i$ are linear combinations (with respect to XOR) of $x_j$:
\begin{equation}\label{eq:pa}
	u_i=\sum_{j=1}^N t_{ij}x_j,\quad i=1,\ldots,r,
\end{equation}
where $t_{ij}$ are the elements of a Toeplitz matrix. 

Thus, Eve knows:
\begin{enumerate}[(i)]
	\item her measurement results $z_{i_1},\ldots,z_{i_n}$;
	\item the syndrome of the used error-correcting code (or parities of certain subsets of positions if an interactive error-correcting procedure like ``Cascade'' is used);
	\item whether she has introduced errors in the intercepted positions: $c_{i_1},\ldots,c_{i_n}$. Also she knows $x_{i_j}$ and $y_{i_j}$ if $c_{i_j}=1$;
	\item $q\leq r-l$ outputs of linear combinations (\ref{eq:pa}).
\end{enumerate}

For convenience of notations, let Eve attacks the first $r$ transmitted state and $i_j=j$. 
Also, to make the derivations simpler, we do the following modification of the protocol: 
let the angles $\varphi_i(k)$ in (\ref{eq:sequence}) are chosen from the set $\{\frac{\pi j}{2M}\}_{j=0}^{2M-1}$ rather than from $\{\frac{\pi j}{2M}\}_{j=0}^{M-1}$, $M\geq2$. 
Thus, we add an additional pseudorandom binary register which is responsible for an additional rotation of the angle over $\pi/2$. 
This does not alter the security properties of the protocol (if the initial key is also added by one bit) 
since a basis rotated over $\pi/2$ is in fact the same basis as the initial one up to the interchange of the assigned bit values 0 and 1. 
However, as we noted before, the bit value is supposed to be known to Eve after the transmission of an encrypted message. 
Such a modification is useless for practice since we should spend the initial secret key on the additional register, 
but it is useful for the purposes of the present section: this modification makes the situation more symmetric and simplifies the analysis. 

\begin{widetext} 

Thus, in expansion (\ref{EqRndRegs}), we have an additional register:
\begin{equation}\label{EqRndRegs0}
	\varphi_i(k)=\frac\pi2\sum_{j=0}^ma_i(k^{(j)})2^{-j}.
\end{equation}

Now we are going to estimate the probability of guessing the seed 
\begin{equation}
	p_{\text{guess}}=\max_k p(k|e_1,...,e_n), 
\end{equation}
where $e_i=(x_i,z_i,c_i)$ is Eve's knowledge on the $i$-th transmitted quantum state. 

Then we have
\begin{equation}\label{EqPk}
	p(k|e_1,...,e_n)=\frac{p(e_1,...,e_n|k)p(k)}{p(e_1,...,e_n)}=2^{-l}\frac{p(e_1|\varphi_1(k))\cdots p(e_n|\varphi_r(k))}{p(e_1,...,e_n)}.
\end{equation}
Let us find an expression for $p(e_i|\varphi_i)$:
\begin{equation}\label{EqEAlpha}
\begin{split}
	p(e_i|\varphi_i)&=p(x_i|\varphi_i)p(z_i|\varphi_i,x_i)p(c_i|\varphi_i,x_i,z_i)=\frac12p(z_i|\varphi_i,x_i)p(c_i|\varphi_i,x_i,z_i)\\
	&=\frac12\cos^2\left[\beta_i-\varphi_i+\frac\pi2(z_i-x_i)\right]\cos^2\left[\beta_i-\varphi_i+\frac\pi2(z_i-x_i-c_i)\right]\\
	&=\frac18\left\lbrace\cos\frac{\pi c_i}2+
	\cos\left[2(\beta_i-\varphi_i)+\pi(x_i-z_i)-\frac{\pi c_i}2\right]\right\rbrace^2\\
	&=\frac18\left\lbrace 1-c_i+
	\cos\left[2(\beta_i-\varphi_i)+\pi(x_i-z_i)-\frac{\pi c_i}2\right]\right\rbrace^2.
\end{split}
\end{equation}
Here we have used that $\cos\frac{\pi x}2=1-x$ if $x\in\{0,1\}$. 
Therefore, we arrive at the following expressions: 
\begin{equation}
	p(e_i)=\frac1M\sum_{j=0}^{M-1}p\left(e_i|\varphi_i=\frac{\pi j}M\right)=
	\begin{cases}
		3/{16},&c_i=0,\\
		1/{16},&c_i=1,
	\end{cases}
	\quad 
	p(c_i)=\sum_{x_i,z_i\in\{0,1\}}p(e_i)=
	\begin{cases}
		3/4,&c_i=0,\\
		1/4,&c_i=1.
	\end{cases}
\end{equation}

Substitution of Eq.~(\ref{EqEAlpha}) into Eq.~(\ref{EqPk}) yields
\begin{equation}\label{EqPk2}
\begin{split}
	p(k|e_1,...,e_n)&=\frac{2^{-l}}{p(e_1,...,e_n)}\prod_{i=1}^n \frac18\left\lbrace 1-c_i+\cos\left[2(\beta_i-\varphi_i)+\pi(x_i-z_i)-\frac{\pi c_i}2\right]\right\rbrace^2
\end{split}
\end{equation}

If $n\leq l'$ (recall that $l'$ is the length of the initial key for each register, while $l=ml'$ is the whole length of the initial key), 
then $\varphi_i$ are approximately independent and uniformly distributed on their domain. 
Hence, Eve's variables $e_1,\ldots,e_n$ are also approximately independent: 
$p(e_1,...,e_n)\approx p(e_1)\cdots p(e_n)$. By properties of the Legendre sequences, for every combination of angles $\varphi_i$, there exists a key $k$ generating this combination.
Then the maximization of Eq.~(\ref{EqPk2}) is equivalent to maximization of every separate term in the product in its right-hand side, i.e., maximization of (\ref{EqEAlpha}).

Let us maximize (\ref{EqEAlpha}) over $\varphi_i$. We have
\begin{equation}
	\max_{\varphi_i}p(e_i|\varphi_i)=
	\begin{cases}
		p(e_i|\beta_i)=1/2,&c_i=0,\:z_i=x_i,\\
		p(e_i|\beta_i+\pi/2)=1/2,&c_i=0,\:z_i\neq x_i,\\
		p(e_i|\beta_i+\pi/4)=1/8,&c_i=1.
	\end{cases}
\end{equation}
Other words, if Eve has not introduced an error and her measurement result $z_i$ has coincided with $x_i$, 
then her optimal guess is that Alice has chosen the same basis as Eve: $\varphi_i=\beta_i$. 
If Eve has not introduce an error, but her measurement result $z_i$ has not coincided with $x_i$, 
then her optimal guess is $\varphi_i=\beta_i+\pi/2$, i.e., Alice has chosen the basis different from the Eve's basis by $\pi/2$: the basis rotated by $\pi/2$ coincides with the initial basis up to a bit flip. 
Finally, if Eve has introduced an error, then her optimal guess is $\varphi_i=\beta_i+\pi/4$, which corresponds to a situation that yields an error with the maximal probability (1/2).

Putting it all together, we arrive at the following expression:
\begin{equation}\label{EqPguess}
	\max_k p(k|e_1,\ldots,e_n)=2^{-l}\left(\frac83\right)^{n_0}2^{n_1}=2^{-l+n_0\log\frac83+n_1}=2^{-l+3n_0+n_1-n_0\log3},
\end{equation}
where $n_0$ and $n_1$ are number of positions where $c_i=0$ and $c_i=1$ respectively. 
Indeed, $n_0+n_1=n$. Since $p(c_i=0)=3/4$ and $p(c_i=1)=1/4$, if $n$ is large, then $n_0\approx3n/4$, $n_1=n/4$. Then

\begin{equation}\label{EqPguessTyp}
\max_k p(k|e_1,\ldots,e_n)\approx
	2^{-l+n[\frac 34\log\frac83+\frac14]}=2^{-l+n[\frac 52-\frac{3\log3}4]}
	\approx 2^{-l+1.3n}.
\end{equation}
\end{widetext}

Recall that this derivation is valid if $n\leq l'$. But we are interested in the inverse case. 
Then the analysis is more complicated. 
Firstly, not every sequence of angles  $\{\varphi_i\}$ is possible: different angles are not independent, hence, 
maximization of the numerator in (\ref{EqPk2}) is not reduced to maximization its separate factors. 
Secondly, $e_1,\ldots,e_n$ are also not independent (as the measurement results of dependent quantum states).

Nevertheless, we will still use formula (\ref{EqPguess}) as an upper bound for the guessing probability. 
The arguments are as follows. 
The most advantageous situation for Eve is when the current angle $\varphi_i$ does not depend on the previous angles. 
In this case, the measurement gives Eve more information than the measurement in the case when Eve already has partial information on $\varphi_i$. 
Thus, in favour of Eve, we treat $\varphi_i$  independent from each other even if $r>l'$ and use formulas (\ref{EqPguess}) and (\ref{EqPguessTyp}). 
Numerical experiments confirm that the validity of these formulas as upper bounds. 
Then Eve can guess a key in approximately (lower bound):
\begin{equation}\label{EqTimeGuess}
	n=\frac l{\frac 52-\frac{3\log3}4}\approx 0.76\,l.
\end{equation}
Thus, Eve needs of order $l$ intercepted positions to correctly guess the initial secret key. Our numerical experiments with short enough keys (up to $l'=10$, which allows one to explicitly implement the proposed maximum likelihood method) and $m\leq8$ show that formula (\ref{EqTimeGuess}) is adequate as a rough estimate at least for the subkeys $k^{(0)}$ and $k^{(1)}$ that govern the highest-order (i.e., most important) bits in the binary expansion of angles (\ref{EqRndRegs0}). One needs much more iterations to correctly guess the lowest-order bits since  close quantum states are hard to distinguish. From the other side, lowest-order registers are less important.  Hence, the assumption that Eve gets a knowledge of the initial key after the transmission of the ciphertext (provided that she knows the plaintext) seems to be not too pessimistic.

\bigskip

\setcounter{equation}{0}
\setcounter{section}{0}
\renewcommand{\theequation}{C\arabic{equation}}

\section*{Appendix C. Guessing in pseudo-random binary sequences}\label{SecGuessing}

In this Appendix we obtain an upper bound for the number of correctly guessed bits in a certain class of binary pseudo-random  sequences.
We assume that Eve has access to unlimited computing power, and we design an optimal attack for Eve.

\begin{widetext}
Let us introduce assumptions about PRNG. Let $\{a_i(k)\}_{i=1}^\infty$ be a periodic sequence with the period $L$ for any $k$, i.e., $a_{i+L}(k)=a_i(k)$. 
The set of keys is $\mathcal K=\mathbb Z_L=\{0,\ldots,L-1\}$, the residue ring with respect to integer addition and multiplication modulo $L$. 
For distinct keys $k_1,\ldots,k_s$ and binary $b_1,\ldots,b_s$, denote 
\begin{equation}
	A_{b_1 \ldots b_s}(k_1,\ldots,k_s)=\{i\in\mathbb Z_L|\,a_i(k_1)=b_1,\ldots,a_i(k_s)=b_s\}.
\end{equation}

Let us assume that there exists $S\geq2$ such that, for all distinct keys $k_1,\ldots,k_S$ and for all binary $b_1,\ldots,b_S$,
\begin{equation}\label{EqPattIdeal}
	|A_{b_1,\ldots,b_s}(k_1,\ldots,k_s)|=\frac L{2^S}.
\end{equation}
It is also assumed that $L$ is divided by $2^S$.

Eve chooses the fraction $0<\gamma\leq1$ of positions that she will try to guess, i.e., she will try to guess $\gamma L$ positions in a period. 
Her aim is to choose the positions to maximize the fraction of the guessed outcomes. 
If Eve attacks all $L$ positions, then, due to Eq.~(\ref{EqPattIdeal}), she guesses exactly a half of positions, which is expected when the sequence is truly random. 
We are going to prove the upper bound for the case  $0<\gamma<1$.

\begin{theorem}\label{Th}
Let the pattern distribution satisfy (\ref{EqPattIdeal}) and we try to guess $n=\gamma L$ positions in a period. 
Then the number of correctly guessed bits does not exceed
\begin{equation}\label{Eqncorroptideal}
	n_{\rm correct}(\gamma)=L\left\lbrace P_{s-1}(r)+\frac{s-r-1}s(\gamma-2P_{s}(r))\right\rbrace
\end{equation}
with the probability at least $1-s/|\mathcal K|$, for every $2\leq s\leq S$. 
Here $P_s(t)=\Pr[X_s\leq t],$ where $X_s$ is a binomially distributed random variable with the number of experiments $s$ and the success probability in one experiment 1/2 (i.e., $P_s(t)$ is a cumulative distribution function), and $r$ is the integer such that $P_{s}(r)\leq\gamma/2$ but $P_{s}(r+1)>\gamma/2$.
\end{theorem}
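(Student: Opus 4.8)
The plan is to reduce the statement to a single deterministic combinatorial inequality and then exploit that the key $k$ is drawn uniformly from $\mathcal K=\mathbb Z_L$. First I would record the reduction: an attacker trying to guess bits of the sequence $\{a_i(k)\}$ commits, before $k$ is revealed, to a set $T\subset\mathbb Z_L$ of positions with $|T|=n=\gamma L$ and, for each $i\in T$, a guess $g_i\in\{0,1\}$; writing $N(k)=|\{i\in T:g_i=a_i(k)\}|$ for the number of correct guesses, the theorem follows once one proves
\begin{equation}\label{EqKeyIneq}
	\sum_{j=1}^s N(k_j)\leq s\,n_{\rm correct}(\gamma)
\end{equation}
for every admissible pair $(T,\{g_i\}_{i\in T})$ and every collection of $s$ distinct keys $k_1,\dots,k_s\in\mathbb Z_L$.

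To prove (\ref{EqKeyIneq}) I would swap the order of summation, $\sum_{j=1}^s N(k_j)=\sum_{i\in T}c_i$ with $c_i=|\{j:a_i(k_j)=g_i\}|$, and bound $c_i\leq\max(w_i,s-w_i)$, where $w_i=|\{j:a_i(k_j)=1\}|$ is the weight of the ``type'' $(a_i(k_1),\dots,a_i(k_s))$ of position $i$. Hypothesis (\ref{EqPattIdeal}) says every binary type occurs at exactly $L/2^s$ positions, so the number of positions of type-weight $w$ is $\binom sw L/2^s$ and the multiset $\{\max(w_i,s-w_i)\}_{i\in\mathbb Z_L}$ depends only on $s$ and $L$; in particular the number of positions with $\max(w_i,s-w_i)\geq s-k$ equals $2LP_s(k)$. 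Hence $\sum_{i\in T}c_i\leq\sum_{i\in T}\max(w_i,s-w_i)$ is maximized by taking the $n$ most extreme positions: all $2LP_s(r)$ positions of extremity $\geq s-r$, then $L(\gamma-2P_s(r))$ further positions of extremity $s-r-1$, with $r$ as in the statement. Evaluating this greedy value, dividing by $s$, and using $(s-k)\binom sk=s\binom{s-1}k$ to collapse $\frac2s\sum_{k=0}^r(s-k)\binom sk2^{-s}$ into $P_{s-1}(r)$, one recovers exactly the right-hand side of (\ref{Eqncorroptideal}), which establishes (\ref{EqKeyIneq}).

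Finally, (\ref{EqKeyIneq}) yields $\min_j N(k_j)\leq\frac1s\sum_j N(k_j)\leq n_{\rm correct}(\gamma)$ for any $s$ distinct keys, so for a fixed admissible strategy the set $B=\{k\in\mathbb Z_L:N(k)>n_{\rm correct}(\gamma)\}$ cannot contain $s$ elements --- otherwise those $s$ keys would violate the inequality --- whence $|B|\leq s-1$. Since $k$ is uniform on $\mathcal K$, the probability that the attacker guesses correctly more than $n_{\rm correct}(\gamma)$ positions equals $|B|/|\mathcal K|\leq(s-1)/|\mathcal K|\leq s/|\mathcal K|$; as this holds for every admissible strategy it holds for the one actually used, and ranging over $2\leq s\leq S$ gives the asserted family of bounds.

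I expect the greedy optimization and the binomial-identity simplification of the second paragraph to be routine, the only care needed being the boundary bookkeeping for $r$ (allowing $r=-1$ when $\gamma<2^{1-s}$, and using $P_s(r+1)>\gamma/2$ to guarantee that the extremity level $s-r-1$ contains enough positions) together with integrality of the counts, which follows from $2^S\mid L$. The genuinely delicate point is the reduction in the first paragraph: one may quantify over all $s$-subsets of keys simultaneously precisely because an optimal attacker must commit to $(T,\{g_i\})$ before the seed is revealed, after which the uniform distribution of $k$ over $\mathbb Z_L$ supplies the probability estimate.
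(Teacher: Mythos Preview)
Your argument is correct and arrives at the same optimization problem and the same greedy solution as the paper, but the reduction you use is somewhat different and in fact tidier. The paper fixes $s$ keys, assumes Eve knows $k\in\{k_1,\dots,k_s\}$, argues that majority voting is the optimal guess, and then maximizes the \emph{minimax} quantity $\min_i n^{(i)}_{\rm correct}$; to collapse this to a tractable LP it invokes the permutation symmetry of the keys to force $n_{b_1\ldots b_s}$ to depend only on the Hamming weight. Your route instead bounds the \emph{sum} $\sum_j N(k_j)=\sum_{i\in T}c_i$ via the pointwise inequality $c_i\leq\max(w_i,s-w_i)$, which sidesteps both the ``majority voting is optimal'' step and the symmetry reduction: the resulting objective is already a single linear functional of the type-weight counts, and the greedy solution and the identity $(s-k)\binom sk=s\binom{s-1}{k}$ then give exactly the paper's formula. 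Your final step is also sharper than the paper's somewhat elliptical last sentence: from the average bound you deduce that the bad set $B=\{k:N(k)>n_{\rm correct}(\gamma)\}$ has at most $s-1$ elements, yielding $(s-1)/|\mathcal K|\leq s/|\mathcal K|$. The only places to be careful are exactly the ones you flag---the boundary case $r+1=s/2$ for even $s$ (where the two weight classes at extremity $s-r-1$ coincide, but the count $\binom{s}{s/2}L/2^s$ still suffices since $2P_s(s/2-1)+\binom{s}{s/2}2^{-s}=1\geq\gamma$), and the fact that hypothesis~(\ref{EqPattIdeal}) for $S$ keys implies the exact count $L/2^s$ for any $s\leq S$ of them by summing over the remaining $S-s$ coordinates.
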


\begin{proof}
Let us consider $s$ arbitrary distinct keys $k_1,\ldots,k_s$. Let $T\subset\mathbb Z_L$ be the set of positions chosen by Eve, $|T|=\gamma L$. 
Denote $n_{b_1\ldots b_s}=|A_{b_1\ldots b_s}(k_1,\ldots,k_s)\cap T|$. 

Let it be known that the actual key $k$ is one of the keys $k_1,\ldots,k_s$. 
Let $n^{(i)}_{\rm correct}$ is the number of our correct guesses provided that the actual key is $k_i$, $i=0,\ldots,s-1$. 
We try to maximize $\min(n^{(1)}_{\rm correct},\ldots,n^{(s)}_{\rm correct})$, i.e., the guaranteed number of correct guesses.  

Let the set $T$ be fixed. 
For each position $j\in T$, if the majority of the values $a_j(k_1),\ldots,a_j(k_s)$ is 0 (1), then the optimal guess $a^{\rm guess}_j$ is also equal to 0 (1), i.e.,
\begin{equation}
	a^{\rm guess}_j=
	\begin{cases}
		0,&{\rm HW}(a_j(k_1),\ldots,a_j(k_s))\leq s/2,\\
		1,&\text{otherwise},
	\end{cases}
\end{equation}
where ${\rm HW}(b_1,\ldots,b_s)$ is the Hamming weight of the vector $(b_1,\ldots,b_s)$. 
Then
\begin{equation}\label{Eqmi}
	n^{(i)}_{\rm correct}=
	\sum_{
	\begin{smallmatrix}
	(b_1,\ldots,b_s):\\
	{\rm HW}(b_1,\ldots,b_s)\leq s/2,\\
	b_i=0
	\end{smallmatrix}
	}
	n_{b_1\ldots b_s}
	+
	\sum_{
	\begin{smallmatrix}
	(b_1,\ldots,b_s):\\
	{\rm HW}(b_1,\ldots,b_s)> s/2,\\
	b_i=1
	\end{smallmatrix}
	}
	n_{b_1\ldots b_s}
\end{equation}
for $i=1,\ldots,s$. 

Thus, we have the following optimization problem with respect to $2^s$ integers $n_{b_1\ldots b_s}$:
\begin{equation}\label{EqOptProbl}
	\left\lbrace
	\begin{aligned}
	&\min(n^{(1)}_{\rm correct},\ldots,n^{(s)}_{\rm correct})\to\max,\\
	&\sum_{(b_1,\ldots,b_s)}n_{b_1\ldots b_s}=\gamma L,\\
	&n_{b_1\ldots b_s}\leq 2^{-s}L,\quad \forall (b_1,\ldots,b_s),
	\end{aligned}\right.
\end{equation}
where the last condition is a consequence of (\ref{EqPattIdeal}).
From the symmetry of the problem we can put
\begin{equation}
n_{b_1\ldots b_s}=n_{{\rm HW}(b_1,\ldots,b_s)}
\end{equation}
(that is, only the number of keys $k_i$ such that $a_j(k_i)=a^{\rm guess}_j$ for a certain position $j$ matters). 
So, $n^{(1)}_{\rm correct}=\ldots=n^{(s)}_{\rm correct}=n_{\rm correct}$. Denote also $\nu_t=n_t/L$ and $\nu_{\rm correct}=n_{\rm correct}/L$. 
 
The number of vectors $(b_1,\ldots,b_s)$ with the Hamming weight $t$ is equal to $\begin{pmatrix}s\\t\end{pmatrix}$ (a binomial coefficient). If we have a constraint $b_i=1$ for fixed $i$ (like in the summation in (\ref{Eqmi})), then the number of vectors with the Hamming weight $t$ is equal to $\begin{pmatrix}s-1\\t-1\end{pmatrix}$. If we have a constraint $b_i=0$ for fixed $i$, then the number of vectors with the Hamming weight $t$ is equal to $\begin{pmatrix}s-1\\t\end{pmatrix}$.
Thus,
\begin{equation}
\nu_{\rm correct}=\sum_{t=0}^{\lfloor s/2\rfloor}
\begin{pmatrix}s-1\\t\end{pmatrix}
\nu_t
+
\sum_{t=\lfloor s/2\rfloor+1}^s
\begin{pmatrix}s-1\\t-1\end{pmatrix}
\nu_t,
\end{equation}
and the optimization problem (\ref{EqOptProbl}) is reduced to
\begin{equation}\label{EqOptProbl2}
\left\lbrace
\begin{aligned}
&\nu_{\rm correct}=\sum_{t=0}^{\lfloor s/2\rfloor}
\begin{pmatrix}s-1\\t\end{pmatrix}
\nu_t
+
\sum_{t=\lfloor s/2\rfloor+1}^s
\begin{pmatrix}s-1\\t-1\end{pmatrix}
\nu_t
\to\max,\\
&\sum_{t=0}^{s}\begin{pmatrix}s\\t\end{pmatrix}\nu_t=\gamma,\\
&\nu_t\leq 2^{-s},\quad t=0,\ldots,s.
\end{aligned}
\right.
\end{equation}
Obviously, an optimal choice is to assign the maximally possible value $2^{-s}$ to $\nu_t$ with $t$ close to $0$ or $s$. This means that we prefer positions where the guessed value is true for large number of keys. Other words, we primarily try to maximize $\nu_0$ and $\nu_s$, then try to maximize $\nu_1$ and $\nu_{s-1}$, and so on. The restriction of this process is the first constraint in (\ref{EqOptProbl2}). Denote $r$ the minimal integer such that
\begin{eqnarray}
	&&\sum_{t=0}^r\begin{pmatrix}s\\t\end{pmatrix}2^{-s}=\sum_{t=s-r}^s\begin{pmatrix}s\\t\end{pmatrix}2^{-s}\leq\frac{\gamma}2, \mbox { but } \label{Eqr1} \\ 
	&&\sum_{t=0}^{r+1}\begin{pmatrix}s\\t\end{pmatrix}2^{-s}=\sum_{t=s-r-1}^s\begin{pmatrix}s\\t\end{pmatrix}2^{-s}>\frac{\gamma}2 \label{Eqr2}.
\end{eqnarray}
Eqs. (\ref{Eqr1}) and (\ref{Eqr2}) can be rewritten as $P_{s}(r)\leq\gamma/2$ and $P_s(r+1)>\gamma/2$. 

Thus, $\nu_t$ are set to maximally possible value $2^{-s}$ for $t\leq r$ and $t\geq s-r$. Then, $\nu_{r+1}$ and $\nu_{s-r-1}$ are assigned by as larger value as possible: 
\begin{equation}
\nu_{r+1}=\nu_{s-r-1}=\left(\frac\gamma2-P_s(r)\right)
\begin{pmatrix}s\\r+1\end{pmatrix}^{-1}.
\end{equation}
Other $\nu_t$ (i.e., for $r+2<t<s-r-2$) are set to zero. The optimal value of the target function $\nu_{\rm correct}$ is
\begin{equation}
\begin{split}
\nu_{\rm correct}&=
\sum_{t=0}^r
\begin{pmatrix}s-1\\t\end{pmatrix}2^{-s}+
\begin{pmatrix}s-1\\r+1\end{pmatrix}
\begin{pmatrix}s\\r+1\end{pmatrix}^{-1}
\left(\frac\gamma2-P_s(r)\right)\\
&+
\sum_{t=s-r}^s
\begin{pmatrix}s-1\\t-1\end{pmatrix}2^{-s}+
\begin{pmatrix}s-1\\s-r-1\end{pmatrix}
\begin{pmatrix}s-r-1\\r+1\end{pmatrix}^{-1}
\left(\frac\gamma2-P_s(s-r)\right)\\
&=
P_{s-1}(r)+\frac{s-r-1}{s}(\gamma-2P_s(r)).
\end{split}
\end{equation}

This result means that, for any $s$ keys, we cannot choose $\gamma L$ positions such that more than $n_{\rm correct}$ guesses given by formula (\ref{Eqncorroptideal}) are correct for all keys. 
This means that the number of correct guesses cannot be larger than $n_{\rm correct}$ with the probability at least $1-s/|\mathcal K|$.
\end{proof}

\begin{remark}\label{RemCorrOpt}
If, in (\ref{Eqncorroptideal}), $s\to\infty$, then, by definition of $r$, we have $P_s(r)\to\gamma/2$ and $P_{s-1}(r)\to\gamma/2$, so, $n_{\rm correct}\to L\gamma/2$. Recall that $L\gamma$ is the number of terms we try to guess, i.e., the fraction of correct guesses tends to a 1/2, as in the case of truly random sequences.
\end{remark}

Now let us relax condition (\ref{EqPattIdeal}).

\begin{corollary}\label{CorGuess}
Let the pattern distribution satisfy 
\begin{equation}\label{EqPatt}
|A_{b_1,\ldots,b_s}(k_1,\ldots,k_s)|=\frac L{2^s}+W(s)
\end{equation}
for all $s$ from some range, where $W(s)$ is some function. We try to guess $n=\gamma L$ positions in a period. Then, with the probability at least $1-\varepsilon$, the number of correctly guess bits does not exceed
\begin{equation}\label{Eqncorropt}
n_{\rm correct}(\gamma,\varepsilon)=L'
\left\lbrace P_{s-1}(r)+\frac{s-r-1}s(\gamma'-2P_{s}(r))\right\rbrace,
\end{equation}
where $s=\varepsilon|\mathcal K|$. Here
\begin{equation}
L'=L'(s)=L\left(1+\frac{2^sW(s)}L\right),
\qquad
\gamma'=\gamma'(s)=\gamma\left(1+\frac{2^sW(s)}L\right)^{-1},
\end{equation}
and $r$ is the integer such that $P_{s}(r)\leq\gamma'/2$ but $P_{s}(r+1)>\gamma'/2$.
\end{corollary}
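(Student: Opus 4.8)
The plan is to reduce the statement to Theorem~\ref{Th} by an affine rescaling of the underlying linear program. First I would re-run the proof of Theorem~\ref{Th}: fix $s$ distinct keys $k_1,\ldots,k_s$, a set $T\subset\mathbb{Z}_L$ of $|T|=\gamma L$ intercepted positions, and the majority-vote guesses $a^{\rm guess}_j$, $j\in T$; then $\min_i n^{(i)}_{\rm correct}$ is again governed by the optimization problem~(\ref{EqOptProbl2}) over the variables $\nu_t=n_t/L$. The only place the pattern structure entered was the box constraint $\nu_t\leq 2^{-s}$, which came from the exact count $|A_{b_1\ldots b_s}|=2^{-s}L$. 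Under hypothesis~(\ref{EqPatt}) one has instead $n_{b_1\ldots b_s}=|A_{b_1\ldots b_s}\cap T|\leq 2^{-s}L+W(s)$ for every pattern, so the constraint is relaxed uniformly to $\nu_t\leq 2^{-s}+W(s)/L$; the symmetrisation step (identifying $\nu_t$ for all patterns of a fixed Hamming weight) goes through verbatim because this bound is the same for every pattern. As in the rest of the paper, I treat $L$, $W(s)$, and $\varepsilon|\mathcal{K}|$ as real numbers, neglecting integrality.

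Next I would make the substitution $\nu_t=(1+2^sW(s)/L)\,\mu_t\equiv c\,\mu_t$. Since the objective $\nu_{\rm correct}$ is linear and homogeneous in the $\nu_t$, it scales to $c\,\mu_{\rm correct}$; the box constraints become $\mu_t\leq 2^{-s}$; and the normalisation $\sum_t\binom{s}{t}\nu_t=\gamma$ becomes $\sum_t\binom{s}{t}\mu_t=\gamma/c\equiv\gamma'$. (One checks $c\geq1$, hence $\gamma'\leq\gamma\leq1$, so the integer $r$ with $P_s(r)\leq\gamma'/2<P_s(r+1)$ is well defined, and $\gamma'=\gamma'(s)$, $L'=cL=L+2^sW(s)=L'(s)$.) This is exactly the optimization problem already solved inside the proof of Theorem~\ref{Th}, but with $\gamma$ replaced by $\gamma'$; its optimal value is $\mu_{\rm correct}=P_{s-1}(r)+\tfrac{s-r-1}{s}(\gamma'-2P_s(r))$. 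Undoing the rescaling, $n_{\rm correct}=L\,\nu_{\rm correct}=Lc\,\mu_{\rm correct}=L'\{P_{s-1}(r)+\tfrac{s-r-1}{s}(\gamma'-2P_s(r))\}$, which is precisely~(\ref{Eqncorropt}).

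For the probability statement I would repeat the counting argument of Theorem~\ref{Th}: Eve's guessing data (the set $T$ and the values $a^{\rm guess}_j$) is fixed before she learns the key, and the bound above shows that among any $s$ keys at least one receives no more than $n_{\rm correct}$ correct guesses; hence the set of keys for which she does strictly better than $n_{\rm correct}$ has cardinality at most $s-1$, so its probability under the uniform key is at most $s/|\mathcal{K}|$. Choosing $s=\varepsilon|\mathcal{K}|$ makes this $\varepsilon$, giving the claimed confidence $1-\varepsilon$.

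The step I expect to need the most care is the very first one: arguing that replacing the true (unknown, pattern-dependent) counts by the single uniform upper bound $2^{-s}L+W(s)$ genuinely over-estimates Eve's power rather than altering the problem in an uncontrolled way. This holds because the objective of~(\ref{EqOptProbl2}) has only non-negative coefficients, so the maximum over the feasible polytope is monotone non-decreasing as the box bounds are loosened, and because the loosened bound is still symmetric in the keys, leaving the symmetry reduction intact. Once this is granted, the rescaling and the probability accounting are routine, and the only remaining care is the corner case $\gamma'/2<2^{-s}$ (take $r=-1$, $P_s(-1)=0$) and the integer rounding ignored above.
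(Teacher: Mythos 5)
Your proposal is correct and follows essentially the same route as the paper: the box constraint of the linear program from Theorem~\ref{Th} is relaxed to $n_{b_1\ldots b_s}\leq 2^{-s}L+W(s)=2^{-s}L'$ while the normalisation becomes $\gamma L=\gamma'L'$, so the problem is the same one with $L,\gamma$ replaced by $L',\gamma'$. Your explicit rescaling $\nu_t=c\,\mu_t$ and the monotonicity remark justifying the uniform relaxation merely spell out steps the paper leaves implicit.
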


Note that, for the PRNG based on the Legendre symbol, (\ref{EqPatt}) is satisfied due to (\ref{EqLegPatt}) and (\ref{EqLegendrePRNG}).

\begin{proof}
The proof is the same, but optimization problem (\ref{EqOptProbl}) is modified into
\begin{equation}\label{EqOptProbl3}
\left\lbrace
\begin{aligned}
&\min(n^{(1)}_{\rm correct},\ldots,n^{(s)}_{\rm correct})\to\max,\\
&\sum_{(b_1,\ldots,b_s)}n_{b_1\ldots b_s}=\gamma L=\gamma'L',\\
&n_{b_1\ldots b_s}\leq 2^{-s}L+W(s)=2^{-s}L',\quad \forall (b_1,\ldots,b_s).
\end{aligned}
\right.
\end{equation}
So, formula (\ref{Eqncorropt}) holds with the substitutions of $L$ and $\gamma$ by $L'$ and $\gamma'$.
\end{proof}

If $s>\lceil L\rceil=l$, formulas (\ref{Eqncorroptideal}) and (\ref{Eqncorropt}) may be too optimistic for Eve. But we can obtain more tight bounds.

\begin{corollary}\label{CorGuess2}
Let (\ref{EqPatt}) be satisfied for some $s$. Other conditions are as in Corollary~\ref{CorGuess}. Then, with the probability at least $1-\varepsilon$, the number of correctly guess bits does not exceed 
\begin{equation}\label{Eqncorroptlp}
n_{\rm correct}(\gamma,\varepsilon)=\nu_{\rm correct}^*L'.
\end{equation}
Here $\nu_{\rm correct}^*$ is the solution of the following linear programming problem for $S=\varepsilon|\mathcal K|$:
\begin{equation}\label{EqOptProbl4}
\left\lbrace
\begin{aligned}
&\nu_{\rm correct}=
\sum_{t=0}^{\lfloor S/2\rfloor}
\begin{pmatrix}S-1\\t\end{pmatrix}\nu_t+
\sum_{t=\lfloor S/2\rfloor+1}^{S-1}
\begin{pmatrix}S-1\\t-1\end{pmatrix}\nu_t\to\max\\
&\sum_{t=0}^{S}\begin{pmatrix}S\\t\end{pmatrix}\nu_t=\gamma'\\
&\sum_{t=0}^{S-s}
\begin{pmatrix}S-s\\t\end{pmatrix}
\nu_{h+t}\leq 2^{-s},\quad h=0,\ldots,s,
\end{aligned}
\right.
\end{equation}
with the agreement $\nu_t=\nu_{S-t}$, i.e., the actual number of variables in the optimization problem is 
$\lfloor (S+1)/2\rfloor$.
\end{corollary}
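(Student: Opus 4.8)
The plan is to mimic the proof of Theorem~\ref{Th} and Corollary~\ref{CorGuess}, but now to run the max--min analysis over a set of $S=\varepsilon|\mathcal K|$ keys while having the pattern--count hypothesis~(\ref{EqPatt}) available only at the smaller level $s$. First I would fix $S$ arbitrary distinct keys $k_1,\ldots,k_S$ and a set $T\subset\mathbb Z_L$ of intercepted positions with $|T|=\gamma L$, and, exactly as in Theorem~\ref{Th}, let Eve guess at each position $j\in T$ the majority value of $a_j(k_1),\ldots,a_j(k_S)$. Then the number of her correct guesses under the hypothesis that the true key is $k_i$ is $n^{(i)}_{\rm correct}$, given by the analogue of~(\ref{Eqmi}) with $s$ replaced by $S$, where $n_{b_1\ldots b_S}=|A_{b_1\ldots b_S}(k_1,\ldots,k_S)\cap T|$. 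The quantity $\min_i n^{(i)}_{\rm correct}$ is a function of these $2^S$ nonnegative numbers, and the actual configuration arising from the true sequence is one admissible choice, so any valid upper bound over all admissible configurations bounds what Eve achieves.

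The only thing that changes from Theorem~\ref{Th} is the family of constraints. The normalization $\sum_{b}n_{b_1\ldots b_S}=|T|=\gamma L=\gamma'L'$ is unchanged. However, since~(\ref{EqPatt}) holds only at level $s$, I cannot bound an individual $n_{b_1\ldots b_S}$; instead, for every choice of $s$ of the $S$ keys and every binary pattern $(c_1,\ldots,c_s)$ on them, summing $n_{b_1\ldots b_S}$ over the remaining $S-s$ coordinates gives the marginal count $|A_{c_1\ldots c_s}(\cdot)\cap T|\le L/2^s+W(s)=2^{-s}L'$. Next I would symmetrize: the objective $\min_i n^{(i)}_{\rm correct}$ is concave (a minimum of linear functionals), and both it and the constraint set are invariant under permuting the $S$ keys and under the global bit--flip $b\mapsto\bar b$ (which flips the majority guess and every $a_j(k_i)$ simultaneously, preserving each $n^{(i)}_{\rm correct}$), so averaging an optimal configuration over this group yields a feasible symmetric configuration with objective no smaller. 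On it $n_{b_1\ldots b_S}$ depends only on the Hamming weight; all $n^{(i)}_{\rm correct}$ coincide; and, setting $\nu_t=n_t/L'$ with $\nu_t=\nu_{S-t}$, the objective and normalization take the first two forms in~(\ref{EqOptProbl4}). Counting how many weight-$\tau$ vectors on $S$ coordinates restrict to a fixed weight-$h$ pattern on $s$ fixed coordinates (namely $\binom{S-s}{\tau-h}$, with $\tau=h+t$) turns each marginal bound into $\sum_{t=0}^{S-s}\binom{S-s}{t}\nu_{h+t}\le 2^{-s}$ for $h=0,\ldots,s$, i.e.\ the third line of~(\ref{EqOptProbl4}). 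Hence $\min_i n^{(i)}_{\rm correct}\le \nu^*_{\rm correct}L'$ for \emph{every} choice of $S$ keys.

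Finally I would draw the probabilistic conclusion as in Theorem~\ref{Th}: for any fixed $T$ and guessing rule, the set of keys for which the number of correct guesses exceeds $\nu^*_{\rm correct}L'$ cannot contain $S$ elements, since those would witness $\min_i n^{(i)}_{\rm correct}>\nu^*_{\rm correct}L'$ and contradict the bound just proved; so it has at most $S-1$ elements, and since the seed is uniform on $\mathcal K$ this event has probability at most $(S-1)/|\mathcal K|<\varepsilon$. Therefore with probability at least $1-\varepsilon$ the number of correctly guessed bits is at most $\nu^*_{\rm correct}L'$, which is~(\ref{Eqncorroptlp}). The step I expect to be most delicate is the symmetrization together with the combinatorial bookkeeping: one must check that the reduction to Hamming-weight classes is simultaneously compatible with \emph{all} $s$-subsets (it is, by the same symmetry group), that the orbit-average indeed stays feasible while not decreasing the concave objective, and that relaxing integer counts to a real linear program only weakens the bound — legitimate here since statistical fluctuations are neglected throughout.
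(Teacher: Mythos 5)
Your proposal is correct and follows essentially the same route as the paper: extend the max--min optimization of Theorem~\ref{Th} to $S$ keys, replace the per-cell bounds by the $s$-wise marginal constraints coming from~(\ref{EqPatt}), symmetrize to Hamming-weight variables $\nu_t=\nu_{S-t}$, and conclude via the ``at most $S-1$ bad keys'' counting argument. You supply somewhat more justification than the paper does for the symmetrization (concavity of the min plus invariance under the permutation/bit-flip group) and for the final probabilistic step, but the substance is identical.
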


\begin{proof}
Let us consider $S$ arbitrary keys $k_1,\ldots,k_S$, but condition (\ref{EqPatt}) is satisfied for $s\leq S$. 
A generalization of optimization problem (\ref{EqOptProbl3}) to this case is

\begin{equation}\label{EqOptProbl5}
\left\lbrace
\begin{aligned}
&\min(n^{(1)}_{\rm correct},\ldots,n^{(s)}_{\rm correct})\to\max,\\
&\sum_{(b_1,\ldots,b_S)}n_{b_1\ldots b_S}=\gamma'L',\\
&\sum_{
\begin{smallmatrix}
(b_1,\ldots,b_S):\\
b_{i_1}=c_1,\ldots,b_{i_s}=c_s
\end{smallmatrix}
}
n_{b_1\ldots b_S}
\leq 2^{-s}L',\quad 
\forall (c_1,\ldots,c_s), \:
 1\leq i_1<\ldots<i_s\leq S,
\end{aligned}
\right.
\end{equation}
where
\begin{equation}
n^{(i)}_{\rm correct}=\sum_{
\begin{smallmatrix}
(b_1,\ldots,b_S):\\
{\rm HW}(b_1,\ldots,b_S)\leq S/2,\\
b_i=0
\end{smallmatrix}
}
n_{b_1\ldots b_S}
+
\sum_{
\begin{smallmatrix}
(b_1,\ldots,b_S):\\
{\rm HW}(b_1,\ldots,b_S)> S/2,\\
b_i=1
\end{smallmatrix}
}
n_{b_1\ldots b_S}
\end{equation}

Again, by the symmetry, we can put 
\begin{equation}
n_{b_1\ldots b_S}=n_{{\rm HW}(b_1,\ldots,b_S)}
\end{equation}
and $n_t=n_{S-t}$, so, the problem is reduced to (\ref{EqOptProbl4}) (where $\nu_t=n_t/L'$ and $\nu_{\rm correct}=n_{\rm correct}/L'$). Let us comment the last set of constraints. If ${\rm HW}(c_1,\ldots,c_s)=h$, then the left-hand side of the last constraint in (\ref{EqOptProbl5}) is reduced to
\begin{equation}
\sum_{t=h}^{h+S-s}
\begin{pmatrix}S-s\\t-h\end{pmatrix}
\nu_t
=
\sum_{t=0}^{S-s}
\begin{pmatrix}S-s\\t\end{pmatrix}
\nu_{h+t},
\end{equation}
which coincides with that of (\ref{EqOptProbl4}).
\end{proof}

\end{widetext}

\begin{remark}
The parameter $s$ in (\ref{EqOptProbl4}) is, in fact, an optimization parameter. If we use the Legendre sequences as PRNG, then, according to (\ref{EqLegPatt}), $s$ should be taken between $\sqrt l$ and $l$: smaller $s$ leads to less tight bounds, while larger $s$ lead to large deviations $W(s)$ in (\ref{EqLegPatt}), which also lead to less tight bounds.
\end{remark}

The comparison of the results of formulas (\ref{Eqncorropt}) and (\ref{Eqncorroptlp}) for Legendre sequences (see Sec.~\ref{SecLegendre} and Appendix~A) is given on Fig.~\ref{fig:deltaprng}. 
\begin{figure}[t]
\begin{centering}
\includegraphics[scale=.45]{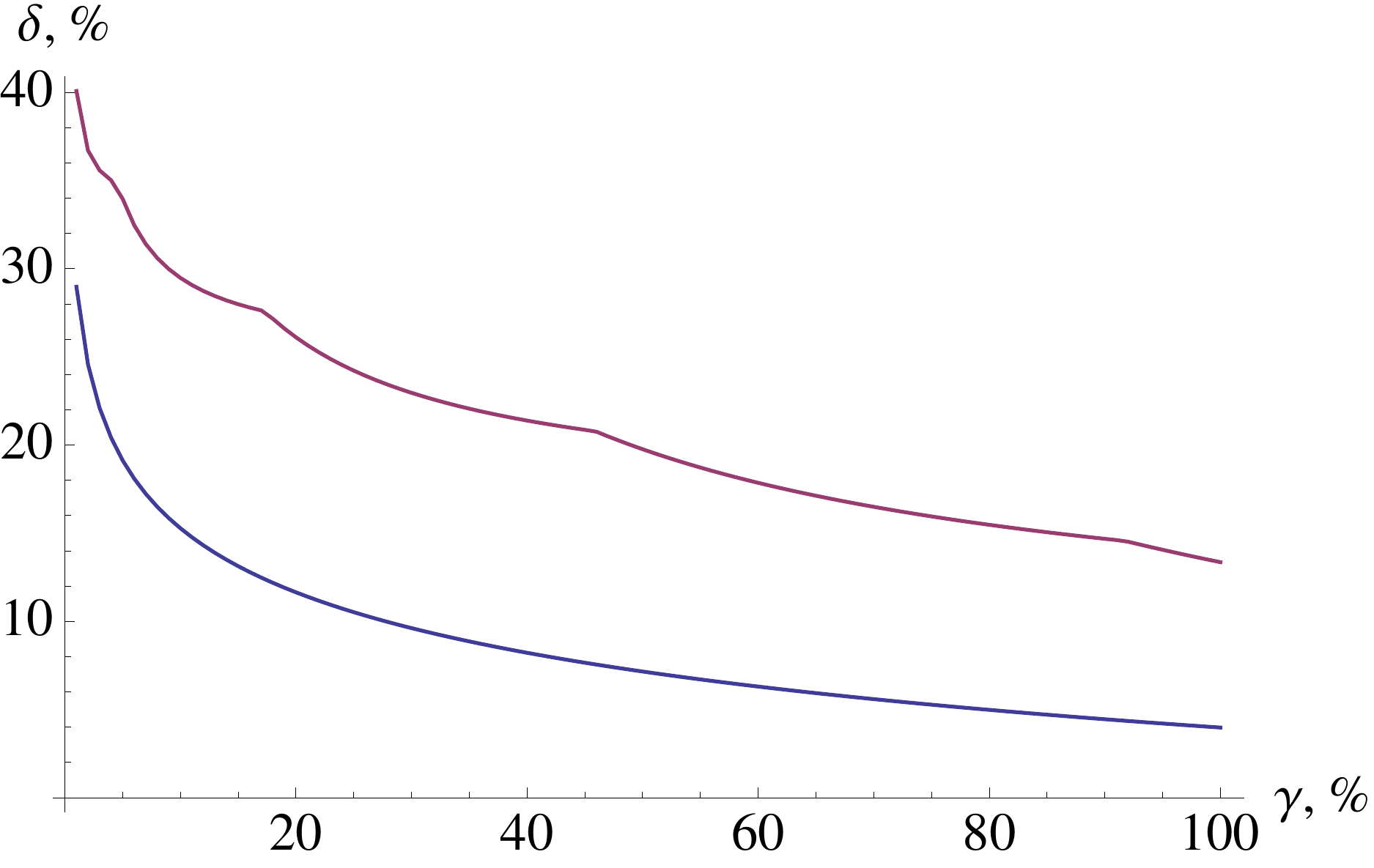}
\end{centering}
\vskip -4mm
\caption
{
Results of analysis of the Legendre sequences in terms of $\delta(\alpha)=\nu_{\rm correct}(\alpha)-1/2$ by formulas (\ref{Eqncorropt}) (top curve) and (\ref{Eqncorroptlp}) (bottom curve) for $L=10^{10}-33$, $\log_2L\approx 16$, $s=12$, and $S=1000$.
}
\label{fig:deltaprng}
\end{figure}
We took $L=10^{10}-33$, $\log_2L\approx 16$, $s=12$ (for both (\ref{Eqncorropt}) and (\ref{Eqncorroptlp})), and $S=1000$, $\varepsilon=S/L\approx10^{-7}$. 
We calculate the quantity $\delta(\gamma)=\nu_{\rm correct}(\gamma)-1/2$, i.e., deviation of $\nu_{\rm correct}$ from the mean value 1/2 in the case of random guessing. 

It is clearly seen that the results of (\ref{Eqncorroptlp}) are significantly better. 
We however note that the estimate for truly random sequences given by the Hoeffding's inequality (\ref{EqHoeffding}) for $P=1/2$, $K=n$, and $\varepsilon=10^{-7}$ yields 
$\delta\approx2.8\times10^{-4}=0.028\%$ even for $n=0.01N$ 
(i.e., $\gamma=0.01=1\%$). 
Thus, from the cryptanalyst's point of view (if it has enough computing power), 
an optimal guesses of elements of pseudorandom sequences gives much better results than the simple random guessing.

\end{document}